\newtheorem{thm}{Theorem}
\newtheorem{defn}{Definition}
\newtheorem{lem}[defn]{Lemma}
\newtheorem{prop}[defn]{Proposition}
\newtheorem{prob}{Problem}
\newtheorem{claim}{Claim}
\newtheorem{subclaim}{Subclaim}
\newcommand{\N}{\mathbb{N}}
\newcommand{\pr}{\mathbb{P}}
\newcommand{\C}{\mathcal{C}}
\newcommand{\F}{\mathcal{F}}
\newcommand{\U}{\mathcal{U}}
\newcommand{\lAP}{\le_{\textnormal{AP}}}
\newcommand{\eAP}{\equiv_{\textnormal{AP}}}
\def\numP{\#\textnormal{P}\xspace}
\newcommand{\nae}{\textnormal{NAE}\xspace}     
\newcommand{\tnae}{\textnormal{3-NAE}\xspace}  
\newcommand{\DS}{\textnormal{DS}}        
\newcommand{\IMS}{\textnormal{IMS}}      
\newcommand{\IS}{\textnormal{IS}}        
\newcommand{\MES}{\textnormal{MES}}      
\newcommand{\MIS}{\textnormal{MIS}}      
\newcommand{\MS}{\textnormal{MS}}        
\newcommand{\SAT}{\textnormal{SAT}}      
\newcommand{\SU}[1]{|\U(#1)|}            
\newcommand{\TMES}{\textnormal{LMES}}    
\newcommand{\UR}[1]{|\U_{\F}^{-1}(#1)|}  
\newcommand{\VC}{\textnormal{VC}}        
\newcommand{\bds}{\textnormal{\#BiDomSets}\xspace}
\newcommand{\bims}{\textnormal{\#Bi\-Inclusion\-Minimal\-Seps}\xspace}
\newcommand{\bmes}{\textnormal{\#Bi\-Minimal\-Edge\-Seps}\xspace}
\newcommand{\bms}{\textnormal{\#Bi\-Minimal\-Seps}\xspace}
\newcommand{\bmstwo}{\textnormal{\#BiConn\-Minimal\-Seps}\xspace}
\newcommand{\bis}{\textnormal{\#BIS}\xspace} 
\newcommand{\is}{\textnormal{\#IS}\xspace}
\newcommand{\largemes}{\textnormal{\#Large\-Minimal\-Edge\-Seps}\xspace}
\newcommand{\largestmes}{\textnormal{\#$(s,t)$-Large\-Minimal\-Edge\-Seps}\xspace}
\newcommand{\maxlbis}{\textnormal{\#MaximalBIS}\xspace}
\newcommand{\rhp}{\textnormal{\#RH$\Pi_1$}\xspace}
\newcommand{\sat}{\textnormal{\#SAT}\xspace}
\newcommand{\setunion}{\textnormal{\#SetUnions}\xspace} 
\newcommand{\stbmes}{\textnormal{\#$(s,t)$-Bi\-Minimal\-Edge\-Seps}\xspace}
\newcommand{\stbms}{\textnormal{\#$(s,t)$-Bi\-Minimal\-Seps}\xspace}
\newcommand{\stbmstwo}{\textnormal{\#$(s,t)$-Bi\-Conn\-Minimal\-Seps}\xspace}
\newcommand{\tnaesat}{\textnormal{\#Monotone\-Promise-3-NAE-SAT}\xspace}
\newcommand{\unionrep}{\textnormal{\#UnionReps}\xspace}
\newcommand{\vc}{\textnormal{\#VertexCovers}\xspace} 
\def\myprob#1#2#3{
 #1.\\
\textbf{Input:} #2.\\
\textbf{Output:}  #3.}
\tikzstyle{vertex} = [circle, inner sep = 0.85mm, fill=black]
\tikzstyle{smallvertex} = [circle, inner sep = 0.4mm, fill=black]
\tikzstyle{ellipses} = [circle, inner sep = 0.15mm, fill=black]
\tikzstyle{occupied} = [draw, color = red, inner sep = 1.9mm]
\tikzstyle{smalloccupied} = [draw, color = red, inner sep = 1mm]
\definecolor{darkred} {rgb}{0.95, 0,   0}
\definecolor{green}   {rgb}{0,    0.9, 0}
\definecolor{blue}    {rgb}{0,    0,   0.9}
\definecolor{grey}    {rgb}{0.9,  0.9, 0.9}
\definecolor{medgrey} {rgb}{0.8,  0.8, 0.8}
\definecolor{darkgrey}{rgb}{0.5,  0.5, 0.5}
\begin{document}

\title {Approximately Counting Locally-Optimal Structures\tnoteref{t1}}
\tnotetext[t1]{
	The research leading to these results has received funding from
	the European Research Council under the European Union's Seventh
	Framework Programme (FP7/2007--2013) ERC grant agreement
	no.\ 334828. The paper reflects only the authors' views and not
	the views of the ERC or the European Commission. The European
	Union is not liable for any use that may be made of the
	information contained therein. The research leading to these results has also received funding from the National Science Foundation, grant number IIS-1219278. A preliminary version~\cite{ICALP} of this
	paper appears in the proceedings of ICALP 2015.}

\author[oxf]{Leslie~Ann~Goldberg}
\ead{leslie.goldberg@cs.ox.ac.uk}

\author[rg]{Rob~Gysel}
\ead{rsgysel@ucdavis.edu}

\author[oxf]{John~Lapinskas\fnref{fn1}}
\ead{john.lapinskas@cs.ox.ac.uk}

\address[oxf]{Department of Computer Science, University of Oxford, Parks Road, OX1 3QD, UK.}
\address[rg]{Department of Computer Science, University of California, 2063 Kemper Hall, One Shields Avenue, Davis, CA 95616-8562, US.}

\fntext[fn1]{Corresponding author.}

\begin{abstract}
		In general, constructing a locally-optimal structure is a little harder than constructing an arbitrary structure, but significantly easier than constructing a globally-optimal structure. A similar situation arises in listing. In counting, most problems are \#P-complete, but in approximate counting we observe an interesting reversal of the pattern. Assuming that \#BIS is not equivalent to \#SAT under AP-reductions, we show that counting maximal independent sets in bipartite graphs is harder than counting maximum independent sets. Motivated by this, we show that various counting problems involving minimal separators are \#SAT-hard to approximate. These problems have applications for constructing triangulations and phylogenetic trees.
\end{abstract}

\maketitle{}

\section{Introduction}\label{sec:intro}

A \emph{locally-optimal} structure is  a combinatorial structure that cannot be improved
by certain (greedy) local moves, even though it may not be globally optimal.
An example is a maximal independent set in a graph.
It is trivial to construct an independent set in a graph (for example, the singleton set containing any vertex
is an independent set).
It is easy to construct a maximal independent set (the greedy algorithm
can do this). However, it is NP-hard to construct a globally-optimal  independent set, 
which in this case means a
maximum independent set.
In the setting in which we work, this situation is typical.  Constructing a locally-optimal structure is somewhat more difficult
than constructing an arbitrary structure, and   constructing a globally-optimal structure
is more difficult than constructing a locally-optimal structure.
For example, in bipartite graphs, it is trivial to construct an independent set,
easy to (greedily) construct a maximal independent set, and more difficult to construct
a maximum independent set (even though this can be done in polynomial time).
This general phenomenon has been well-studied. In 1987, Johnson, Papadimitriou and
Yannakakis~\cite{JYPPLS} defined the complexity class PLS (for ``polynomial-time local search'')
that captures local optimisation problems where one iteration
of the local search algorithm takes polynomial time. 
As the authors point out, practically all empirical evidence leads to the conclusion
that finding locally-optimal solutions is much easier than 
solving NP-hard problems,
and this is supported by complexity-theoretic evidence, since  a problem in PLS cannot be NP-hard unless NP=co-NP.
An example that illustrates this point is the graph partitioning problem.
For this problem it is trivial to find a valid partition, and it is NP-hard to
find a globally-optimal (minimum weight) partition but
Sch\"{a}ffer and Yannakakis~\cite{sy-pls} showed that 
finding a locally-optimal solution (with respect to a particular swapping-dynamics)
is PLS-complete, so is presumably of intermediate complexity.

For listing combinatorial structures, a similar pattern emerges. 
By self-reducibility, there is a nearly-trivial polynomial-space polynomial-delay algorithm for listing the independent
sets of a graph~\cite{leslielisting}. 
A polynomial-space polynomial-delay algorithm for listing the \emph{maximal} independent
sets exists, due to Tsukiyama et al.~\cite{tsukiyama}, but it is more complicated.
On the other hand, there is no polynomial-space polynomial-delay algorithm for listing
the \emph{maximum} independent sets unless P=NP.
There is a polynomial-space polynomial-delay algorithm for
listing the maximum independent sets of a bipartite graph~\cite{kmnf}, but this is substantially more 
complicated than any of the previous algorithms. 

When we move from constructing and listing to counting, these differences
become obscured because nearly everything is $\numP$-complete.
For example, counting independent sets, maximal independent sets, and maximum independent
sets of a graph are all $\numP$-complete problems, even if the graph is bipartite~\cite{vadhan}. 
Furthermore, even \emph{approximately} counting independent sets, maximal independent sets, and
maximum independent sets of a graph
are all   $\numP$-complete with respect to approximation-preserving reductions~\cite{dggj-approx}.

The purpose of this paper is to highlight
an interesting situation that arises in approximate counting
where, contrary to the situations that we have just discussed,
approximately counting locally-optimal structures is apparently more difficult than counting
globally-optimal  structures.

In order to explain the result, we first briefly summarise what is known about the
complexity of approximate counting
within $\numP$. This will be explained in more detail in Section~\ref{sec:notation}.
There are three relevant complexity classes ---
the class containing problems which admit a fully-polynomial randomised approximation scheme (FPRAS),
the class \rhp, and $\numP$~itself. 
 Dyer et al.~\cite{dggj-approx} showed
that $\bis$, the problem of counting independent sets in a bipartite graph,
is complete 
for~\rhp\
with respect to approximation-preserving (AP) reductions
and that $\is$, the problem of counting independent sets in a (general) graph
is  $\numP$-complete with respect to AP-reductions.
It is generally believed that the   \rhp-complete problems
are not FPRASable, but that they are 
of intermediate complexity, and are not as difficult to approximate as the problems
which are  $\numP$-complete with respect to AP-reductions.
Many problems have subsequently been shown to be  \rhp-complete and  \numP-complete
with respect to AP-reductions. More examples will be given in Section~\ref{sec:notation}.

We can now describe the interesting situation which emerges with respect to
independent sets in bipartite graphs.
Dyer et al.~\cite{dggj-approx} showed that approximately counting independent 
sets and approximately counting \emph{maximum} independent sets 
are both  \rhp-complete with respect to AP-reductions.
Thus, the pattern outlined above would suggest
that approximately counting \emph{maximal} independent sets in
bipartite graphs   ought to also be \rhp-complete.
However, we show 
(Theorem~\ref{thm:maxl-bis-hard}, below)
that approximately counting \emph{maximal} independent
sets in bipartite graphs is actually   \numP-complete with respect to AP-reductions.
Thus, either
\rhp\ and \numP\ are equivalent in approximation complexity (contrary to the picture that has been 
emerging in earlier papers), or
this is a scenario where approximately counting locally-optimal structures 
is actually more difficult than approximately counting globally-optimal ones.

Motivated by the difficulty of approximately counting maximal independent sets
in bipartite graphs, we  also study the problem of approximately counting
other  locally-optimal structures that  
arise in  algorithmic applications.
First, the problem of counting
the \emph{minimal separators} of a graph arises in diverse applications
from triangulation theory to phylogeny construction in computational biology.
A minimal separator is a particular type of vertex separator. Definitions are given in Section~\ref{sec:OurResults}.
Algorithmic applications arise because
fixed-parameter-tractable algorithms are known whose running time
is   polynomial in the number of minimal separators of a graph.
These algorithms 
were originally developed by Bouchitt{\'e} and Todinca \cite{BT01,BT02} (and improved in \cite{FKTV08}) to exactly solve the so-called \emph{treewidth} and \emph{minimum-fill} problems.
The former problem, finding the exact treewidth of a graph, is widely studied due to its applicability to a number of other NP-complete 
problems~\cite{BK08}.
The technique has recently been generalized~\cite{FY14} to cover 
problems including \emph{treecost}~\cite{BF05} and \emph{treelength}~\cite{L10}.
The algorithm can also be used to find a 
minimum-width
\emph{tree-decomposition} of a graph, a key data structure that is used to solve a variety of NP-complete problems in polynomial time when the width of the tree-decomposition is fixed \cite{BK08}.
In recent years, much research has been dedicated to exact-exponential algorithms for treewidth \cite{BFKKT12}, the fastest of which \cite{FV10} 
has running time closely connected to the number of minimal separators in the graph. Indeed, there exist polynomials $p_L$ and $p_U$ such that if the graph has $n$ vertices and $M$ minimal separators, then the running time is at least $p_L(n)M$ and at most $p_U(n)M^2$.

Bouchitt{\'e} and Todinca's approach has also recently been applied to solve the \emph{perfect phylogeny problem} and two of its variants \cite{G14}.
In this problem, the input is a set of phylogenetic characters, each of which may be viewed as a partition of a subset of \emph{species}.
The goal is to find a phylogenetic tree 
such that every character is \emph{convex} on that tree --- that is, the parts of each partition form connected subtrees that do not overlap.
Such a tree is called a \emph{perfect phylogeny}.

In all of these applications, it would be useful to count the minimal vertex separators of
a graph, since this would give an a priori bound on the running time of the algorithms.
Thus, we consider the difficulty of this problem, whose complexity was previously unresolved, even in terms
of exact computation.
Theorem~\ref{thm:ms-vertex-hard} shows that the problem of counting minimal separators
is $\numP$-complete, both with respect to Turing reductions (for exact computation) and
with respect to AP-reductions. Thus, this problem is as difficult to approximate as any problem in~$\numP$.

Motivated by applications to treewidth~\cite{FKTV08} and phylogeny~\cite{RobPreprint, G14},
we also consider various heuristic approximations to the minimal separator problem.
The number of inclusion-minimal separators is a natural choice for a lower bound on the number
of minimal separators. Conversely, the number of $(s,t)$-minimal separators, taken over all
vertices~$s$ and~$t$, is a natural choice for an upper bound on the number of minimal separators.
Theorem~\ref{thm:ms-vertex-hard} shows that both of these bounds are difficult to
compute, either exactly or approximately.
Finally, the number and structure of $2$-component minimal separators
is important in computational biology. $2$-component minimal separators
arise naturally in the problem of determining whether a subset of
``quartet phylogenies'' can be assembled uniquely~\cite{RobPreprint}.
Thus, we study the problem of counting such minimal separators.
Theorem~\ref{thm:ms-vertex-hard} shows that they are 
complete for $\numP$ with respect to exact and approximate computation.

Our new results about counting minimal vertex separators are obtained by first considering the problem
of counting minimal edge separators. These locally-optimal structures are also known as \emph{bonds} or \emph{minimal cuts}, and are well-studied in other contexts --- see e.g. Diestel~\cite{diestel}.
Theorem~\ref{thm:ms-edge-hard} gives the first hardness result for counting these
structures, either exactly or approximately.

In addition to studying maximal independent
sets and minimal vertex and edge separators,
we study  two other locally-optimal structures related
to maximal independent sets in bipartite graphs. 
A maximal independent set is precisely an independent set in a graph which is also a
\emph{dominating set}.  Theorem~\ref{thm:bds-hard} shows that counting dominating sets
in bipartite graphs is $\numP$-hard with respect to AP-reductions.
It is already known to be $\numP$-hard to compute exactly~\cite{kou}.
Finally, in Theorems~\ref{thm:su-hard} and
\ref{thm:ur-hard} we show that maximal independent sets in bipartite graphs can be represented as
unions of sets, so a set union problem \setunion\ is also
$\numP$-hard with respect to AP-reductions, and so is its inverse \unionrep.

 \subsection{Detailed Results}
 \label{sec:OurResults}

We now give  formal definitions of the problems that we study,
and state our results  precisely. Note that all problems are  indexed for reference   at the end of the paper.
Our first result is that counting maximal independent sets in a bipartite
graph is $\numP$-complete with respect to AP-reductions (even though counting
 maximum independent sets 
in bipartite graphs  is only \rhp-complete with respect to these reductions).
For readers that are unfamiliar with AP-reductions, details
 are given in Section~\ref{sec:notation}.

\begin{defn}\label{defn:maxl-is}
Let $G$ be a graph. We say that an independent set $X \subseteq V(G)$ of~$G$ is \emph{maximal} if 
no proper superset of $X$ is an independent set of~$G$.
\end{defn}

\begin{prob}\label{prob:maxlbis}\myprob{\maxlbis}{A bipartite graph $G$}{The number of maximal
independent sets of~$G$}
\end{prob}
 
The following theorem is proved in Section~\ref{sec:maxl-bis}.

\begin{thm}\label{thm:maxl-bis-hard}
$\maxlbis \eAP \sat$.
\end{thm}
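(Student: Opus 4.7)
I would prove $\maxlbis \eAP \sat$ by establishing the two directions $\maxlbis \lAP \sat$ and $\sat \lAP \maxlbis$ separately.

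The direction $\maxlbis \lAP \sat$ is immediate. Since a polynomial-time verifier can check whether a given subset of vertices is a maximal independent set of the input bipartite graph, $\maxlbis$ lies in $\numP$. As $\sat$ is $\numP$-complete under AP-reductions (see Section~\ref{sec:notation}), every problem in $\numP$ AP-reduces to $\sat$, and in particular so does $\maxlbis$.

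For the reverse direction $\sat \lAP \maxlbis$, it suffices to show $\is \lAP \maxlbis$, since $\is \eAP \sat$~\cite{dggj-approx}. My plan is to reduce from $\is$: given a graph $G$, construct a bipartite graph $H = H(G)$ whose maximal independent sets are in (weighted) bijection with the independent sets of $G$. The main building block is, for each vertex $v$ of $G$, a pair of ``truth'' vertices $v^+$ and $v^-$ placed on opposite sides of $H$ and joined by an edge; the maximality condition at this edge forces exactly one of $\{v^+, v^-\}$ into every maximal independent set of $H$, encoding whether $v$ belongs to the corresponding independent set of~$G$. For each edge $uv$ of $G$, a gadget of auxiliary vertices would enforce the IS constraint ``not both $u^+$ and $v^+$ in the maximal independent set'' via its local maximality conditions, routed through the opposite side of $H$ since $u^+$ and $v^+$ share a side of the bipartition and so cannot be directly adjacent.

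The main obstacle is that bipartiteness prevents directly enforcing a same-side exclusion, and any auxiliary gadget vertex introduces additional $\forall\exists$ maximality constraints that may spawn spurious extensions. The technical work lies in designing a gadget whose possible local states (under maximality) correspond exactly to the IS condition of $G$, and then verifying by case analysis that each independent set of $G$ extends to a uniform --- and ideally unique --- maximal independent set of $H$, with no extraneous cases. If exact bijection proves elusive, a standard trick is to pad the gadget so that every valid configuration admits the same number of extensions, giving a uniform multiplicative relation between $|\maxlbis(H)|$ and $|\is(G)|$ that an AP-reduction can correct. A secondary concern is that certain ``degenerate'' states of the gadget (e.g., both $v^+$ and $v^-$ out of the IS, supported by auxiliary neighbours) may need to be suppressed by further pendant structure; managing this without inflating the count is the central combinatorial challenge of the reduction.
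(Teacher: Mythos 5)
The easy direction ($\maxlbis \lAP \sat$ via membership in $\numP$) and the choice to reduce from $\is$ for the hard direction both match the paper. But your plan for the reduction has a genuine gap, and not merely because the edge gadget is left unconstructed: the goal you set for that gadget is unachievable. You want the gadget attached to $u^+$ and $v^+$ (which lie on the same side of $H$ and are therefore non-adjacent) to ensure that no maximal independent set of $H$ contains both. However, $\{u^+,v^+\}$ is an independent set of $H$, and \emph{every} independent set of a graph extends greedily to a maximal one; so some maximal independent set of $H$ necessarily contains both $u^+$ and $v^+$, no matter what auxiliary structure you attach. The same obstruction defeats your fallback of padding so that ``every valid configuration admits the same number of extensions'': the problem is not non-uniformity among valid configurations but the impossibility of eliminating the invalid ones. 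The ``degenerate'' states you flag as a secondary concern are in fact the central one, and they cannot be suppressed.

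The paper resolves this by giving up on an exact correspondence. Each edge $\{i,j\}$ of $G$ is replaced by $t=n+2$ internally disjoint paths of length $4$ (so all original vertices land on one side of the bipartition), and a pendant vertex is attached to each original vertex to handle its domination constraint (this plays the role of your $v^-$). One then counts, for each $S\subseteq[n]$, the number $\MIS_S(G')$ of maximal independent sets restricting to $S$: a path gadget contributes $1$ choice when both endpoints are occupied and $2$ choices otherwise, giving $\MIS_S(G')=2^{(m-\mu(S))t}$ where $\mu(S)$ is the number of edges inside $S$. Independent sets thus each contribute exactly $2^{mt}$, while the at most $2^n$ non-independent sets contribute at most $2^{(m-1)t}$ each, so $\IS(G)\le \MIS(G')/2^{tm}\le \IS(G)+\tfrac14$, which suffices for an AP-reduction exactly as in Theorem~3 of~\cite{dggj-approx}. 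If you redirect your construction toward this kind of additive-error bound rather than an exact bijection, the argument goes through; as written, the proposal cannot be completed.
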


Next we state our results relating to counting minimal separators.
In the following definitions, $G=(V,E)$ is a graph, $s$ and $t$ are distinct vertices of~$G$,
and $X\subseteq V$ is a set of vertices.

\begin{defn}   $X$ is an \emph{$(s,t)$-separator} of $G$ if $s$ and $t$ lie in different components of $G-X$. 
If, in addition, no proper subset of $X$ is an $(s,t)$-separator of~$G$, then
we say that $X$ is a   \emph{minimal $(s,t)$-separator}  of~$G$.
\end{defn}

\begin{defn} $X$ is a \emph{minimal separator} of~$G$
if it is a minimal $(s,t)$-separator for some $s,t \in V$.
\end{defn}

For example, let $G = (V,E)$ be
the graph  defined by 
$$V =  \{1,2,3,4,5\}, \quad \mbox{and} \quad E = \{\{1,2\},\{2,3\},\{3,4\},\{4,1\},\{1,5\}\}.$$
$G$ is a 
four-edge cycle with a pendant vertex. 
Then $\{1,3\}$ is a minimal separator of~$G$
since it is a minimal $(2,4)$-separator.

We have already seen that algorithms for counting and approximately counting
minimal separators
are useful in algorithmic applications.
There is also lots of existing work on listing minimal separators. 
Given a graph~$G$, let $n$ be the number
of vertices and let $m$ be the number of edges.
Kloks and Kratsch, and independently, Sheng and Liang, showed how to compute all $(s,t)$-minimal separators in $O(n^3)$ time per $(s,t)$-minimal separator \cite{KK98,SL97}.
Computing all minimal separators by computing $(s,t)$-minimal separators for each possible vertex pair in this way leads to an $O(n^5)$ time per minimal separator listing algorithm.
Berry, Bordat, and Cogis~\cite{BBC00} improved this approach, computing all minimal separators in $O(n^3)$ time per minimal separator.
Each of these algorithms require storing minimal separators in an adequate data structure.
Takata's algorithm~\cite{Takata-seps} generates the set of minimal separators in $O(n^3m)$ time per minimal separator but linear space.
A graph has at most $O(1.6181^n)$ minimal separators \cite{FV12}.
We study the following computational problems, 
based on our desire to count and to approximately count minimal separators.
 
\begin{prob}\label{prob:stbms}
\myprob{\stbms}{A bipartite graph $G$ and two vertices $s,t \in V(G)$}{The number of minimal $(s,t)$-separators of $G$, which we denote by $\MS(G,s,t)$}
\end{prob}

\begin{prob}\label{prob:bms}
\myprob{\bms}{A bipartite graph $G$}{The number of minimal separators of $G$, which we denote by $\MS(G)$}
\end{prob}

Theorem~\ref{thm:ms-vertex-hard} below 
shows that both problems are   $\numP$-complete
to solve exactly and are complete for~$\numP$  with respect to approximation-preserving reductions.

Motivated by applications to phylogeny~\cite{RobPreprint}
we also consider various heuristic approximations to the minimal separator problem.
We start by defining the notion of an inclusion-minimal separator, since the number of these is a natural
lower bound for the number of minimal separators.
 
 \begin{defn} Let $G$ be a graph. A minimal separator $X$ of~$G$ is
said to be an \emph{inclusion-minimal separator} if no proper subset of $X$ is a minimal separator.
\end{defn}

In the five-vertex example above, the minimal
separator $\{1,3\}$ is 
 not an inclusion-minimal separator since $\{1\}\subset \{1,3\}$ is a minimal $(5,4)$-separator.  
 However $\{1\}$ is an inclusion-minimal separator.
We consider the following computational problem.

\begin{prob}\label{prob:bims}
\myprob{\bims}{A bipartite graph $G$}{The number of inclusion-minimal separators of $G$, which we denote by $\IMS(G)$}
\end{prob}

We also consider the problem of counting $2$-component minimal separators
since these arise in phylogenetic assembly.

\begin{prob}\label{prob:stbmstwo}
\myprob{\stbmstwo}{A bipartite graph $G$ and two vertices $s,t \in V(G)$}
{The number of minimal $(s,t)$-separators $X$ of $G$ 
such that $G-X$ has exactly two connected components}
\end{prob}

\begin{prob}\label{prob:bmstwo}
\myprob{\bmstwo}{A bipartite graph $G$}
{The number of minimal  separators $X$ of $G$ 
such that $G-X$ has exactly two connected components}
\end{prob}

Finally, our main theorem about minimal separators shows that all of these
problems are $\numP$-complete and are also complete for $\numP$ with respect
to AP-reductions.

\begin{thm}\label{thm:ms-vertex-hard}
The problems 
\stbms, \bms, \stbmstwo, \bmstwo\
and \bims are \numP-complete and are equivalent to \sat under AP-reduction.
\end{thm}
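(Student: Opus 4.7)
The plan is to show that all five problems lie in \numP\ and to AP-reduce \maxlbis\ to each of them; combining with Theorem~\ref{thm:maxl-bis-hard}, which gives $\maxlbis \eAP \sat$, yields the equivalence with \sat\ under AP-reductions, and the explicit polynomial-time transformations we construct will also witness exact \numP-completeness. For \numP\ membership we use the standard fact that $X \subseteq V(G)$ is a minimal separator iff $G - X$ has at least two \emph{full} components (connected components $C$ with $N(C) = X$); this yields polynomial-time verification of minimality, the two-component restriction, the $(s,t)$-restriction (check which components contain $s,t$), and inclusion-minimality.

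The core reduction handles the $(s,t)$-versions. Given a bipartite \maxlbis\ instance $H = (A \cup B, E)$, we may assume $H$ has no isolated vertices (each such vertex lies in every maximal independent set and contributes a multiplicative factor). Construct $G_H$ on vertex set $A \cup B \cup \{s, t\}$ by keeping all edges of $H$, joining $s$ to every vertex of $A$ and $t$ to every vertex of $B$. Then $G_H$ is bipartite with parts $(A \cup \{t\}, B \cup \{s\})$. Any $s$-$t$ path in $G_H$ traverses at least one edge of $H$, so $X \subseteq A \cup B$ is an $(s,t)$-separator iff $X$ is a vertex cover of $H$; hence the minimal $(s,t)$-separators of $G_H$ are in bijection with the minimal vertex covers of $H$, equivalently (by complementation in $V(H)$) with the maximal independent sets of $H$. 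Moreover, for each such $X$, $G_H - X$ decomposes into exactly two components $\{s\} \cup (A \setminus X)$ and $\{t\} \cup (B \setminus X)$, since $X$ covers every $A$--$B$ edge. This yields $\maxlbis \lAP \stbms$ and $\maxlbis \lAP \stbmstwo$ at once.

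Extending to \bms, \bmstwo, and \bims\ requires controlling all minimal separators of the constructed graph, not only those separating $s$ from $t$. The plain $G_H$ admits spurious minimal separators; for instance, if $a \in A$ is the unique $A$-neighbour of some $b \in B$, then $\{a, t\}$ is a minimal $(s, b)$-separator although not an $(s, t)$-separator. We therefore augment $G_H$ to a bipartite graph $G_H^+$ by adjoining a suitable bipartite gadget --- natural ingredients include several ``twin'' copies of $s$ and $t$ sharing their neighbourhoods, together with further structure that makes removal of $s$, $t$, or any twin too cheap to yield a minimal separator --- chosen so that every minimal separator of $G_H^+$ is forced to be a minimal $(s,t)$-separator coming from a minimal vertex cover of $H$. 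Since minimal vertex covers of $H$ form an antichain under inclusion, once this containment is achieved every minimal separator of $G_H^+$ is automatically inclusion-minimal, and the two-component conclusion of the previous paragraph still holds, so one reduction simultaneously handles $\bms$, $\bmstwo$, and $\bims$.

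The main technical obstacle is verifying that the gadget really eliminates all spurious minimal separators. Using the full-component characterisation, the plan is to argue by a case analysis on the pair $(u, v)$ that a hypothetical extraneous minimal separator $X$ could separate --- with $u$ or $v$ lying in $A$, in $B$, among the twins of $s$ or $t$, or in an auxiliary part of the gadget --- showing in each case either that some $(u, v)$-path survives in $G_H^+ - X$ (so $X$ is not a separator for that pair) or that $X$ does not produce two full components containing $u$ and $v$ (so $X$ is not a minimal separator for that pair). Balancing the gadget's simplicity against the bipartite constraint and the need to preserve the bijection with the maximal independent sets of $H$ is what makes the reductions to the non-$(s,t)$ problems more delicate than those to $\stbms$ and $\stbmstwo$.
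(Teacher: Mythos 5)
Your reduction for the two $(s,t)$-problems is correct and takes a genuinely different, simpler route than the paper: attaching $s$ to all of $A$ and $t$ to all of $B$ makes the minimal $(s,t)$-separators of $G_H$ exactly the minimal vertex covers of $H$, each leaving precisely two components, so the answer equals the number of maximal independent sets of $H$. This is a parsimonious reduction, and combined with Theorem~\ref{thm:maxl-bis-hard} (and the known exact \numP-hardness of \maxlbis from~\cite{vadhan}) it does dispose of \stbms and \stbmstwo. The paper instead routes everything through minimal \emph{edge} separators and a MAX-CUT-style argument (Lemmas~\ref{lem:tnaesat-hard}, \ref{lem:largeminsep-hard} and~\ref{lem:minsep-core-vertex}), which is heavier machinery for these two cases.

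The gap is in \bms, \bmstwo\ and \bims. Your plan needs a bipartite gadget $G_H^+$ in which \emph{every} minimal separator is a minimal $(s,t)$-separator coming from a minimal vertex cover of $H$, but you never construct it, and there is a concrete obstruction to the ingredients you name. Twin copies of $t$ sharing the neighbourhood $B$ do not eliminate minimal $(b,b')$-separators for non-adjacent $b,b'\in B$: any $(b,b')$-separator must contain all the twins, but a minimal one still exists (in a connected graph any two non-adjacent vertices admit at least one minimal separator, since $V\setminus\{b,b'\}$ separates them and contains a minimal separator); twins merely inflate these spurious separators, they do not destroy them. The same applies to pairs inside $A$ and to pairs involving gadget vertices, and these spurious separators need not be few in number, so they cannot be silently absorbed into an approximation error either. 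This is exactly the difficulty the paper confronts differently: rather than eliminating spurious separators, Lemma~\ref{lem:minsep-core-vertex} $k$-thickens and $4$-stretches edges so that each desired structure is counted $3^{kx}$ times while all spurious minimal separators together contribute at most $3^{kx}/4$; the maximum-cardinality intermediate problems \largemes and \largestmes exist precisely to make this amplification single out the right objects. Until you exhibit a concrete gadget and carry out the case analysis you only sketch, the claims for \bms, \bmstwo\ and \bims\ remain unproved.
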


Theorem~\ref{thm:ms-vertex-hard} is proved in Section~\ref{sec:minsep}.
In order to prove  it, we first
study algorithmic problems related to other natural locally-optimal structures,
namely minimal edge-separators. These problems are interesting 
for their own sake, but they are also used in the proof of Theorem~\ref{thm:ms-vertex-hard}.
In the following definitions, $G=(V,E)$ is again a graph, and $s$ and $t$ are distinct vertices of~$G$.
$F\subseteq E$ is a set of edges of~$G$.

\begin{defn}\label{defn:edge-separators}
 $F $ is an \emph{$(s,t)$-edge separator} of $G$ if $s$ and $t$ lie in different components of $G - F$. 
 If in addition no proper subset of $F$ is an $(s,t)$-edge separator of~$G$ then we say that $F$
 is a  \emph{minimal $(s,t)$-edge separator}  of~$G$. 
\end{defn}

\begin{defn}\label{defn:vertex-separators}
 $F$ is a \emph{minimal edge separator} of~$G$ if it is a minimal $(s,t)$-edge separator for some $s,t \in V$.
\end{defn}

As the following proposition shows,
there is no need to define inclusion-minimal edge separators, since these
would be the same as minimal edge separators.
\begin{prop}\label{prop:no-bimes}
Let $G = (V,E)$ be a connected graph. An edge separator $F \subseteq E$ of $G$ is minimal if and only if no proper subset of $F$ is an edge separator of $G$.
\end{prop}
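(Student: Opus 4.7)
The plan is to prove the two directions of the biconditional separately, relying on Definition~\ref{defn:vertex-separators} (which specifies that a minimal edge separator is a minimal $(s,t)$-edge separator for some pair $s,t$). Throughout, I use ``edge separator of $G$'' to mean a set $F \subseteq E$ such that $G-F$ is disconnected, equivalently, an $(s,t)$-edge separator for some pair $s, t \in V$.

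The easy direction is $(\Leftarrow)$. Suppose $F$ is an edge separator of $G$ and no proper subset of $F$ is an edge separator. Since $F$ is an edge separator, $G - F$ is disconnected, so we may pick $s, t$ in distinct components of $G - F$; then $F$ is an $(s,t)$-edge separator. Any proper subset $F' \subset F$ that were an $(s,t)$-edge separator would in particular be an edge separator, contradicting our hypothesis. Hence $F$ is a minimal $(s,t)$-edge separator and therefore a minimal edge separator of $G$.

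The harder direction is $(\Rightarrow)$. The key intermediate claim is that when $G$ is connected and $F$ is a minimal $(s,t)$-edge separator, $G - F$ has exactly two connected components, namely the component $C_s$ containing $s$ and the component $C_t$ containing $t$, and every edge of $F$ has one endpoint in $C_s$ and the other in $C_t$. To see this, I would take any $e = \{u,v\} \in F$ and use minimality to produce an $s$-$t$ path in $G - (F \setminus \{e\})$; this path must use $e$, so after deleting $e$ the two halves witness that $\{u,v\}$ straddles $C_s$ and $C_t$. Hence every edge of $F$ runs between $C_s$ and $C_t$. Any further component $C$ of $G - F$ would, by connectedness of $G$, be joined to $C_s \cup C_t$ by an edge of $F$; but such an edge has already been shown to have both endpoints in $C_s \cup C_t$, a contradiction.

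With this in hand, let $F' \subsetneq F$ and pick $e = \{u,v\} \in F \setminus F'$. The induced subgraphs $G[C_s]$ and $G[C_t]$ are unaffected by removing edges of $F$ or $F'$, so they remain connected in $G - F'$. The edge $e$ lies in $G - F'$ and joins $C_s$ to $C_t$, so $G - F'$ is connected on $V = C_s \cup C_t$, meaning $F'$ is not an edge separator of $G$. Thus no proper subset of $F$ is an edge separator, completing the proof.

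The main obstacle is the two-component lemma in the $(\Rightarrow)$ direction; it is classical but genuinely requires both the minimality of $F$ (to force each edge to straddle $C_s$ and $C_t$) and the connectedness of $G$ (to rule out additional components). Everything else is a direct unpacking of the definitions.
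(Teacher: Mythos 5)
Your proof is correct and follows essentially the same route as the paper: the two-component claim at the heart of your $(\Rightarrow)$ direction is precisely the paper's Proposition~\ref{prop:max-minl-seps-are-cuts} (a result of Whitney), from which the paper declares the present statement immediate. You have simply inlined a proof of that characterization rather than citing it.
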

\begin{proof}
This is 
immediate 
from a slightly more general proposition,
Proposition~\ref{prop:max-minl-seps-are-cuts}, which in turn is a result of Whitney~\cite{whitney-seps}.
\end{proof}

We study the following problems, showing that they are both $\numP$-complete with respect
to AP-reductions and $\numP$-complete to compute exactly.

\begin{prob}\label{prob:stbmes}
\myprob{\stbmes}{A bipartite graph $G$ and two vertices $s,t \in V(G)$}{The number of minimal $(s,t)$-edge separators of $G$, which we denote by $\MES(G,s,t)$}
\end{prob}

\begin{prob}\label{prob:bmes}
\myprob{\bmes}{A bipartite graph $G$}{The number of minimal edge separators of $G$, which we denote by $\MES(G)$}
\end{prob}

\begin{thm}\label{thm:ms-edge-hard}
The problems \bmes and \stbmes are \numP-complete and are equivalent to \sat under AP-reduction.
\end{thm}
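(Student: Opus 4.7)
I would prove Theorem~\ref{thm:ms-edge-hard} in two stages: first placing both problems in \numP (which gives the upper bounds $\stbmes, \bmes \lAP \sat$), and then AP-reducing \maxlbis to \stbmes, extending to \bmes. Membership in \numP is routine: given $F \subseteq E(H)$, one can verify in polynomial time that $F$ is a minimal $(s,t)$-edge separator by checking that $s$ and $t$ lie in different components of $H - F$ and that, for each $e \in F$, they lie in the same component of $H - (F \setminus \{e\})$; the case of \bmes is analogous. Since \sat is \numP-complete under AP-reductions, both problems AP-reduce to \sat.

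For the opposite direction, I would reduce from \maxlbis, which is \sat-equivalent by Theorem~\ref{thm:maxl-bis-hard}. The backbone of the reduction is the standard observation that, in a connected graph $H$, a minimal $(s,t)$-edge separator $F$ induces a partition $V(H) = V_s \sqcup V_t$ with $s \in V_s$, $t \in V_t$ and both $H[V_s]$, $H[V_t]$ connected (with $F$ being precisely the cross-edges), and that this correspondence is a bijection. Thus counting minimal $(s,t)$-edge separators amounts to counting \emph{connected $(s,t)$-bipartitions} of $V(H)$.

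Given a bipartite instance $G = (A \cup B, E)$ of \maxlbis, I would construct a graph $H$ together with distinguished vertices $s, t$ so that connected $(s,t)$-bipartitions of $V(H)$ are in bijection with the maximal independent sets of $G$. Each $v \in V(G)$ would carry a small vertex gadget attached to $s$ and $t$, whose placement in $V_s$ versus $V_t$ encodes the membership of $v$ in an independent set $I$; for each edge $uv \in E$, an edge gadget would be inserted that blocks the connected bipartition in precisely the ``bad'' cases (both $u,v \in I$ violating independence, or some $v \notin I$ having no $I$-neighbour to witness maximality). To extend the result from \stbmes to \bmes, I would make $s$ and $t$ the centres of large pendant structures (or place them inside large cliques) so that every minimal edge separator of $H$ is automatically a minimal $(s,t)$-separator for the chosen pair, altering the count by at most a trivial computable factor. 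A parsimonious correspondence will give exact \numP-completeness on top of the AP-hardness.

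The principal technical obstacle will be the gadget design. The independence constraint on $I$ is local and easy to enforce via a controlled connectivity failure on the $V_s$-side whenever adjacent $G$-vertices both encode $v \in I$. The maximality constraint, however, is existential --- each $v \notin I$ requires \emph{some} neighbour in $I$ --- and must be translated into a global connectivity condition on the $V_t$-side gadgetry. Achieving both simultaneously, with no spurious connected bipartitions that would destroy the bijection, and while keeping the reduction parsimonious enough to yield exact \numP-completeness together with AP-hardness, will be the core of the argument.
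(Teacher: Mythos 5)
Your framework (membership in \numP, plus Whitney's correspondence between minimal edge separators of a connected graph and connected bipartitions) matches the paper's starting point, but the proposal has two genuine gaps. First, the entire content of the lower bound is deferred to an unspecified gadget: you acknowledge that enforcing independence and, especially, the existential maximality condition (``each $v \notin I$ has some neighbour in $I$'') as a connectivity condition, parsimoniously and without spurious connected bipartitions, ``will be the core of the argument'' --- but that core is exactly what a proof must supply, and it is far from routine. You also never address bipartiteness of the constructed graph, which the problem statement requires (and your suggestion of placing $s,t$ inside large cliques would destroy it).

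Second, your plan for passing from \stbmes to \bmes does not work as stated. It is impossible to arrange that \emph{every} minimal edge separator of $H$ separates $s$ from $t$: for any vertex $v \ne s,t$ whose removal leaves the graph connected, the edges incident to $v$ already form a minimal edge separator with $s$ and $t$ on the same side, and in general there are exponentially many such connected bipartitions not separating $s$ from $t$. Their number is not ``a trivial computable factor''; it is another instance of the same hard counting problem. The paper circumvents both difficulties by a different route: it reduces from monotone 3-NAE-SAT to counting \emph{maximum-cardinality} minimal edge separators (a connectivity-constrained MAX-CUT, where a spanning clique forces every maximum cut to be balanced and hence to separate a fixed pair $x_1,\overline{x_1}$), and then applies a $k$-thickening followed by a $2$-stretching of each edge. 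The stretching makes the graph bipartite, and the thickening multiplies the count of each separator of cardinality $i$ by $2^{ki}$, so the maximum-cardinality separators exponentially dominate and all unwanted separators contribute less than $1/4$ after normalisation. Some amplification argument of this kind (or a genuinely parsimonious global construction, which you have not exhibited) is needed; as written, the proposal does not establish the theorem.
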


In addition to studying maximal independent
sets and minimal vertex and edge separators,
we study  two other structures related
to maximal independent sets in bipartite graphs.

\begin{defn}\label{defn:dom-set}
Let $G$ be a graph. We say that a set $X \subseteq V(G)$ is a \emph{dominating set} in $G$ if every vertex in $V(G) \setminus X$ sends an edge into $X$.
\end{defn}

We consider the following computational problem.

\begin{prob}\label{prob:bds}
\myprob{\bds}{A bipartite graph $G$}{The number of dominating sets in $G$}
\end{prob}

It is already known~\cite{kou} that exactly counting dominating sets in bipartite graphs is~$\numP$-complete. We show
that that approximately counting them is also complete for $\numP$ with respect to AP-reductions.

\begin{thm}\label{thm:bds-hard}
$\bds \eAP \sat$.
\end{thm}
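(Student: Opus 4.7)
I will prove $\bds \eAP \sat$ by establishing both directions.

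The direction $\bds \lAP \sat$ is immediate: a dominating set of a bipartite graph is a polynomial-time-verifiable witness, so $\bds \in \numP$, and $\sat$ is AP-complete for $\numP$ by Dyer et al.~\cite{dggj-approx}.

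For the converse $\sat \lAP \bds$, the plan is to reduce from $\maxlbis$, using the AP-equivalence $\maxlbis \eAP \sat$ established in Theorem~\ref{thm:maxl-bis-hard}. The idea is to exploit the fact that a maximal independent set is precisely an independent set that is also a dominating set, so $\bds$ should be able to ``see'' \maxlbis\ if we can arrange the domination constraints to encode independence. Given a bipartite graph $G = (U \cup V, E)$, I construct a bipartite graph $G^*$ by (i) attaching a pendant gadget (consisting of one or more pendant vertices) to each $w \in V(G)$, and (ii) replacing each edge $uv \in E(G)$ by a subdivided path $u$--$a_{uv}$--$b_{uv}$--$v$ augmented with pendants on the subdivision vertices $a_{uv}$ and $b_{uv}$. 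All additions preserve bipartiteness, since subdivision vertices and their pendants can be consistently 2-coloured.

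The analysis proceeds by conditioning each dominating set $S$ of $G^*$ on $X := S \cap V(G)$ and enumerating the local configurations of each gadget. The vertex-pendants have the effect that $w$'s contribution to the count depends only on $[w \in X]$ (the pendants are forced into $S$ iff $w \notin X$), and similarly each edge gadget contributes a factor that depends only on $|\{u,v\} \cap X|$. Consequently, $\bds(G^*)$ factors as $\sum_X \prod_w f(w, [w \in X]) \prod_{uv} g(uv, |\{u,v\} \cap X|)$ for computable $f,g$. By tuning the pendant multiplicities, I aim to arrange $g$ so that it takes a distinguished value when both endpoints of $uv$ lie in $X$, biasing the sum in a controlled way towards independent sets; combined with the domination conditions carried by the gadget, this isolates the contribution of $X$'s that are simultaneously independent and dominating in $G$---that is, maximal independent sets. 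Iterating the construction with several pendant counts and solving the resulting polynomially-conditioned linear system (a standard move in AP-reductions for partition functions) then recovers $\maxlbis(G)$ from polynomially many calls to a $\bds$ oracle.

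The main obstacle is the gadget design. Because dominating-set constraints are monotone under enlargement of $S$---adding vertices only makes other vertices easier to dominate---no bipartite gadget can outright forbid both endpoints of an edge from lying in $X$. The gadget must therefore \emph{weight} non-independent configurations down, rather than ruling them out, and the weight ratios must remain polynomially bounded in the size of $G$ in order to preserve approximation. Verifying that the subdivision-plus-pendant gadget (or a similar variant) admits pendant counts realising weight profiles with these properties, and that the attendant linear system can be inverted within an AP-preserving tolerance, is the principal technical step.
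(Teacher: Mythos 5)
There is a genuine gap: the entire reduction rests on a gadget whose existence you never establish, and the obstruction you yourself identify is in fact fatal to the route you choose. You want to reduce from \maxlbis, which requires isolating those $X \subseteq V(G)$ that are \emph{independent} in $G$. But, as you note, domination is monotone under enlargement of $S$, so for any bipartite edge gadget the number of ways to extend a partial dominating set is non-decreasing in $|\{u,v\}\cap X|$; your weight $g$ necessarily satisfies $g(uv,2)\ge g(uv,1)\ge g(uv,0)$. Hence non-independent sets $X$ contribute at least as much per edge gadget as independent ones, and typically there are exponentially more of them, so the quantity you want is an exponentially small fraction of $\DS(G^*)$. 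Extracting it by ``solving a linear system'' over several pendant multiplicities is polynomial interpolation, which preserves exact values but not approximations: an FPRAS gives relative error, and recovering a small target as a signed combination of large approximate quantities destroys the relative-error guarantee. So the deferred ``principal technical step'' is not a routine verification; it is the whole difficulty, and the monotonicity argument strongly suggests no such gadget exists for this choice of source problem.

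The paper avoids this by reducing from \vc (counting vertex covers, \sat-hard as the complement of \is) rather than from \maxlbis. For vertex covers the monotonicity works \emph{for} you instead of against you: replacing each edge $\{i,j\}$ by $t$ internally disjoint paths $i$--$x$--$j$, a covered edge leaves each middle vertex $x$ free (factor $2^t$) while an uncovered edge forces all of $X_e$ into $S$ (factor $1$), and an apex vertex $s$ joined to $[n]$ and to $t$ pendant-like vertices handles the domination of the original vertices. Thus good dominating sets number exactly $2^{(m+1)t}\VC(G)$ and the bad ones are at most a $\tfrac14$ additive error after normalising, so no cancellation or interpolation is needed. Your $\bds\lAP\sat$ direction is fine, but for the hard direction you should abandon the \maxlbis source and the independence-penalising gadget in favour of a source problem whose structure aligns with the monotonicity of domination.
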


Finally, we show that maximal independent sets in bipartite graphs 
can be represented as unions of sets, so a natural set union problem is also
$\numP$-hard with respect to AP-reductions, and so is its inverse.
To describe the problem, we  use the following notation.
Throughout the paper, we write $\N$ for the set $\{1, 2, \dots\}$ of natural numbers. For all $n \in \N$, we write $[n] = \{1, 2, \dots, n\}$.  

\begin{defn}
Let $\F \subseteq 2^{[n]}$. We define $\cup \F = \bigcup_{F \in \F} F$, $\U(\F) = \{\cup\F'\mid \F' \subseteq \F\}$, and $\U_{\F}^{-1}(F) = \{\F' \subseteq \F \mid  \cup\F' = F\}$. 
\end{defn}

For example, taking $\F = \{\{1\}, \{1,2\}, \{3,4\}\}$, we have
\begin{align*}
\cup\F &= [4],\\
\U(\F) &= \{\{1\},\{1,2\},\{3,4\},\{1,3,4\},\{1,2,3,4\}\},\\
\U_{\F}^{-1}([4]) &= \{\{\{1,2\},\{3,4\}\},
                \{\{1\},\{1,2\},\{3,4\}\}\}.
\end{align*}
Note in particular that we may have $\U_{\F}^{-1}(F) = \emptyset$.

The following theorems are proved in Section~\ref{sec:related}.

\begin{prob}\label{prob:setunion}
\myprob{\setunion}{An integer $n \in \N$ and a family of sets $\F \subseteq 2^{[n]}$}{$\SU{\F}$}
\end{prob}

\begin{thm}\label{thm:su-hard}
$\setunion \eAP \sat$.
\end{thm}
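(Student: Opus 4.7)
The plan is to establish both $\setunion \lAP \sat$ and $\sat \lAP \setunion$. The first direction is the easier one: $\setunion$ lies in $\numP$ because a set $S \subseteq [n]$ is an element of $\U(\F)$ if and only if $\bigcup_{F \in \F,\, F \subseteq S} F = S$, a condition that is checkable in polynomial time and supplies a unique canonical witness for each member of $\U(\F)$. By the completeness of $\sat$ under AP-reductions, this yields $\setunion \lAP \sat$.

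For the hard direction I would chain $\sat \lAP \maxlbis \lAP \setunion$, invoking Theorem~\ref{thm:maxl-bis-hard} for the first step. The second reduction will be parsimonious. Given a bipartite graph $G = (L \cup R, E)$, fix any bijection between $R$ and $[n]$ where $n = |R|$ and let $\F = \{N_G(v) : v \in L\} \subseteq 2^{[n]}$. The key claim is that $|\U(\F)|$ equals the number of maximal independent sets of $G$; since the counts agree exactly, this parsimonious reduction automatically doubles as an AP-reduction, closing the loop.

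The main step is the bijection. If $X$ is a maximal independent set of $G$ and $X_L = X \cap L$, then independence forces $X \cap R \subseteq R \setminus N(X_L)$; maximality against $R$-vertices promotes this to equality; and maximality against $L$-vertices forces $X_L = \{v \in L : N(v) \subseteq N(X_L)\}$. Thus $X$ is determined by $X_L$, and $X \mapsto N(X_L) = \bigcup_{v \in X_L} N_G(v)$ maps into $\U(\F)$. Conversely, for each $S \in \U(\F)$ I would set $X_L^* = \{v \in L : N(v) \subseteq S\}$; writing $S = \bigcup_{v \in Y} N_G(v)$ for some $Y \subseteq L$, every $v \in Y$ satisfies $N(v) \subseteq S$, so $Y \subseteq X_L^*$, giving $S \subseteq N(X_L^*) \subseteq S$, and hence $N(X_L^*) = S$. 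Then $X_L^* \cup (R \setminus S)$ is a maximal independent set of $G$ mapping back to $S$. The only delicate part is confirming that these two maps are mutual inverses, which follows immediately from the characterization $X_L = \{v \in L : N(v) \subseteq N(X_L)\}$ derived above; the main ``obstacle'' is thus really just bookkeeping rather than a genuine technical hurdle.
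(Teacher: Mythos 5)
Your proof is correct and takes essentially the same route as the paper: both directions are handled identically, with the hard direction given by the parsimonious reduction $\maxlbis \lAP \setunion$ that sends a bipartite graph to the family of neighbourhoods of one side. Your bijection $X \mapsto N(X\cap L)$ is the same map as the paper's $X \mapsto \overline{X\cap [n]}$ (maximality forces the uncovered side to be exactly the neighbourhood of the occupied $L$-vertices), so the two write-ups differ only in bookkeeping.
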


Note that the connection between the two problems driving Theorem~\ref{thm:su-hard} was already known in the context of the union-closed sets conjecture --- see Bruhn, Charbit, Schaudt and Telle~\cite{bcst-setunion}. We give an explicit proof for clarity.

\begin{prob}\label{prob:unionrep}
\myprob{\unionrep}{An integer $n \in \N$ and a family of sets $\F \subseteq 2^{[n]}$}{$\UR{[n]}$}
\end{prob}

\begin{thm}\label{thm:ur-hard}
$\unionrep \eAP \sat$.
\end{thm}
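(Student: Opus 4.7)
The plan is to prove each direction of the equivalence separately. For $\unionrep \lAP \sat$, I would observe that $\unionrep$ lies in $\numP$: a witness is a subfamily $\F' \subseteq \F$, and checking $\cup\F' = [n]$ takes polynomial time, so the reduction follows immediately from the result of Dyer et al.~\cite{dggj-approx} that $\sat$ is $\numP$-hard under AP-reductions.

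For the other direction $\sat \lAP \unionrep$, I would reduce from $\bds$, which is AP-equivalent to $\sat$ by Theorem~\ref{thm:bds-hard}. The key observation is that if $G$ is a graph with vertex set $V$ and one sets $\F = \{N_G[v] : v \in V\}$, where $N_G[v] = \{v\} \cup N_G(v)$ denotes the closed neighbourhood of $v$, then for any $D \subseteq V$ one has $\bigcup_{v \in D} N_G[v] = D \cup N_G(D)$, which equals $V$ precisely when $D$ is a dominating set of $G$. Hence the map $D \mapsto \{N_G[v] : v \in D\}$ carries dominating sets of $G$ into $\U_{\F}^{-1}(V)$, and it is a bijection onto $\U_{\F}^{-1}(V)$ provided that $v \mapsto N_G[v]$ is injective.

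The main obstacle is guaranteeing this injectivity. A short case analysis, splitting on whether two distinct vertices $v, v'$ lie in the same part of the bipartition or in opposite parts, shows that in a bipartite graph one has $N_G[v] = N_G[v']$ with $v \neq v'$ only when $\{v, v'\}$ is an entire connected component of $G$ isomorphic to $K_2$; I will call such a component an \emph{isolated edge}. To deal with isolated edges, I would preprocess $G$ by deleting all of them (both vertices and the connecting edge), obtaining a bipartite graph $G'$ with no isolated edges. Since a $K_2$ component has exactly three dominating sets ($\{v\}$, $\{v'\}$, and $\{v,v'\}$) and dominating-set counts are multiplicative across connected components, the number of dominating sets of $G$ equals $3^e$ times the number of dominating sets of $G'$, where $e$ is the number of deleted isolated edges. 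Identifying $V(G')$ with $[n']$ for $n' = |V(G')|$, the value of $\unionrep$ on the instance $(n', \{N_{G'}[v] : v \in V(G')\})$ equals the number of dominating sets of $G'$. The overall reduction computes $e$ and $G'$ in polynomial time, issues a single oracle call to $\unionrep$, and multiplies the answer by $3^e$ to recover the number of dominating sets of $G$ exactly, yielding an AP-reduction.
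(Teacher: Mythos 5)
Your proof is correct, but it takes a genuinely different route from the paper. The paper reduces from \vc{} directly: it takes the universe to be the edge set $E$ (identified with $[m]$) and sets $\F = \{S_i \mid i \in [n]\}$ where $S_i$ is the set of edges incident to vertex $i$, so that subfamilies with union $E$ correspond to vertex covers. You instead reduce from \bds{} via Theorem~\ref{thm:bds-hard}, taking the universe to be the vertex set and the family to consist of closed neighbourhoods, so that subfamilies with union $V$ correspond to dominating sets. Both encodings are natural ``set cover'' views of the respective problems. The paper's choice buys a shorter reduction chain (\vc{} is hard directly in the guise of \is) and largely sidesteps the injectivity question, since in a simple graph two distinct non-degenerate vertices cannot have identical incident-edge sets; your choice requires the longer chain through \bds{} (whose hardness proof itself needs a gadget construction) and forces you to confront the collision $N_G[v] = N_G[v']$ head-on. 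Your analysis of that collision is right --- in a bipartite graph it occurs exactly for $K_2$ components --- and your fix (strip the $e$ isolated edges, call the oracle on the remainder, multiply by $3^e$; multiplicative corrections preserve relative error, so this is a valid AP-reduction) is sound. Two tiny housekeeping points: you should say what the reduction outputs when $G'$ ends up with no vertices (just return $3^e$ without an oracle call, since $\N$ starts at $1$ in this paper), and it is worth noting that your level of care about collisions is actually slightly greater than the paper's own proof, which tacitly assumes its map $i \mapsto S_i$ is injective (this can fail for isolated vertices or isolated edges, degenerate cases one would remove by the same kind of preprocessing you describe).
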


\section{Preliminaries}\label{sec:notation}

Let $X$ and $Y$ be sets. Then we write $X \subseteq Y$ if $X$ is a subset of $Y$, and $X \subset Y$ if $X$ is a proper subset of $Y$. We write $2^X$ for the power set of $X$. For $t \in \N$, we write $X^{(t)}$ for the set of subsets of $X$ of cardinality $t$.

Let $X$ and $Y$ be multisets. We write $X \uplus Y$ for the disjoint union of $X$ and $Y$. We also adopt the convention that elements of a multiset with the same name are nevertheless distinguishable. 

We require our graphs to be simple, i.e. to have no loops or multiple edges. We require our multigraphs to have no loops. Let $G = (V,E)$ be a multigraph. For all $v \in V$, we write $N(v) = \{w \in V: \{v,w\} \in E\}$. For all $S \subseteq V$, we write $N(S) = \bigcup_{v \in S} N(v)$. We define the underlying graph of $G$ to be the graph with vertex set $V$ and edge set $\{e: e \in  E\}$.

Let $G = (V,E)$ be a graph. If $F \subseteq E$, we write $G-F$ for the graph $(V, E \setminus F)$. If $X \subseteq V$, we write $G-X$ for the graph $G[V \setminus X]$ induced by $G$ on $V \setminus X$.

The following definitions are standard in the field, and have been taken largely from~\cite{gj-notation}. 
We require our problem inputs to be given as finite binary strings, and write $\Sigma^*$ for the set of all such strings. A \emph{randomised approximation scheme} is an algorithm for approximately computing the value of a function $f:\Sigma^* \rightarrow \N$. The approximation scheme has a parameter $\varepsilon \in (0,1)$ which specifies the error tolerance. A \emph{randomised approximation scheme} for $f$ is a randomised algorithm that takes as input an instance $x \in \Sigma^*$ (e.g. an encoding of the graph $G$ in an instance of \maxlbis) and a rational error tolerance $\varepsilon \in (0,1)$, and outputs a rational number $z$ (a random variable depending on the ``coin tosses'' made by the algorithm) such that, for every instance $x$, $\pr(e^{-\varepsilon} f(x) \le z \le e^{\varepsilon} f(x)) \ge \frac{3}{4}$. The randomised approximation scheme is said to be a \emph{fully polynomial randomised approximation scheme}, or \emph{FPRAS}, if it runs in time bounded by a polynomial in $|x|$ and $\varepsilon^{-1}$.

Our main tool for understanding the relative difficulty of approximation counting problems is \emph{approximation-preserving reductions}. We use the notion of AP-reduction from Dyer et al.~\cite{dggj-approx}. Suppose that $f$ and $g$ are functions from $\Sigma^*$ to $\N$. An AP-reduction from $f$ to $g$ gives a way to turn an FPRAS for $g$ into an FPRAS for $f$. An \emph{approximation-preserving reduction} or \emph{AP-reduction} from $f$ to $g$ is a randomised algorithm $\mathcal{A}$ for computing $f$ using an oracle for $g$. The algorithm $\mathcal{A}$ takes as input a pair $(x, \varepsilon) \in \Sigma^* \times (0,1)$, and satisfies the following three conditions: (i) every oracle call made by $\mathcal{A}$ is of the form $(w, \delta)$, where $w \in \Sigma^*$ is an instance of $g$, and $\delta \in (0,1)$ is an error bound satisfying $\delta^{-1} \le \textnormal{poly}(|x|,\varepsilon^{-1})$; (ii) the algorithm $\mathcal{A}$ meets the specification for being a randomised approximation scheme for $f$ (as described above) whenever the oracle meets the specification for being a randomised approximation scheme for $g$; and (iii) the run-time of $\mathcal{A}$ is polynomial in $|x|$ and $\varepsilon^{-1}$ and the bit-size of the values returned by the oracle.

If an AP-reduction from $f$ to $g$ exists we write $f \lAP g$, and say that \emph{$f$ is AP-reducible to $g$}. Note that if $f \lAP g$ and $g$ has an FPRAS then $f$ has an FPRAS. (The definition of AP-reduction was chosen to make this true.) If $f \lAP g$ and $g \lAP f$ then we say that \emph{$f$ and $g$ are equivalent under AP-reduction}, and write $f \eAP g$. A word of warning about terminology: the notation $\lAP$ has been used (see e.g.~\cite{crescenzi-badap}) to denote a different type of approximation-preserving reduction which applies to optimisation problems. We will not study optimisation problems in this paper, so hopefully this will not cause confusion.

Dyer et al.~\cite{dggj-approx} studied counting problems in \numP and identified three classes of counting problems that are interreducible under AP-reductions. The first class, containing the problems that have an FPRAS, are trivially equivalent under AP-reduction since all the work can be embedded into the reduction (which declines to use the oracle). The second class is the set of problems that are equivalent to \sat, the problem of counting satisfying assignments to a Boolean formula in CNF, under AP-reduction. 
These problems are complete for $\numP$ with respect to AP-reductions.
Zuckerman~\cite{zuckerman-sat} has shown that \sat cannot have an FPRAS unless $\textnormal{RP} = \textnormal{NP}$. The same is obviously true of any problem to which \sat is AP-reducible.

The third class appears to be of intermediate complexity. It contains all of the counting problems expressible in a certain logically-defined complexity class, \rhp. Typical complete problems include counting the downsets in a partially ordered set~\cite{dggj-approx}, computing the partition function of the ferromagnetic Ising model with local external magnetic fields~\cite{gj-ferroising}, and counting the independent sets in a bipartite graph, which is formally defined as follows.

\begin{prob}\label{prob:bis}
\myprob{\bis}{A bipartite graph $G$}{The number of independent sets in $G$, which we denote by $\IS(G)$}
\end{prob}

In~\cite{dggj-approx} it was shown that \bis is complete for the logically-defined complexity class \rhp with respect to AP-reductions. Goldberg and Jerrum~\cite{gj-bisconjecture} have conjectured that there is no FPRAS for \bis.
Early indications point to the fact that it may be of intermediate complexity,
between the FPRASable problems and those that are complete for $\numP$ with respect to AP-reductions.

\section{Hardness of \maxlbis} \label{sec:maxl-bis}
 
We first prove    that \maxlbis is 
complete for $\numP$ with respect
to AP-reductions. We reduce from the well-known problem of counting independent sets in an arbitrary 
graph.

\begin{prob}\label{prob:is}
\myprob{\is} {A graph $G$}{The number of independent sets in $G$}
\end{prob}

Note that \is is complete for~$\numP$ with respect to AP-reductions ---
 indeed, the following appears as Theorem~3 of Dyer, Goldberg, Greenhill and Jerrum~\cite{dggj-approx}.

\begin{thm}\label{thm:is-hard} (DGGJ)
$\is \eAP \sat$.
\end{thm}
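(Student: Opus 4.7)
\medskip

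\noindent\textbf{Proof proposal.}  I would split the equivalence into the two AP-reductions $\is \lAP \sat$ and $\sat \lAP \is$.

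The easy direction $\is \lAP \sat$ comes from the textbook encoding of independent sets as CNF satisfying assignments.  Given a graph $G = (V,E)$, I would form the CNF formula $\phi_G$ with one Boolean variable $x_v$ per vertex $v \in V$ and one $2$-clause $\neg x_u \vee \neg x_v$ per edge $\{u,v\} \in E$.  The map $X \mapsto (x_v = \text{true} \iff v \in X)$ is a bijection between independent sets of $G$ and satisfying assignments of $\phi_G$, so $\is(G) = \sat(\phi_G)$ exactly.  This yields an AP-reduction with a single oracle call at the input error tolerance $\varepsilon$.

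The substantive direction is $\sat \lAP \is$.  Here the plan is a gadget-based reduction.  Given a CNF formula $\phi$ on variables $x_1, \dots, x_n$ with clauses $C_1, \dots, C_m$, I would build a graph $G_\phi$ containing, for each variable $x_i$, a two-vertex gadget with distinguished ``true'' and ``false'' vertices joined by an edge (so that any independent set selects at most one, encoding a partial truth value), together with attachments to auxiliary vertices used to weight the two choices equally and to ensure both vertices contribute the same to unforced extensions.  For each clause $C_j$ I would attach a gadget whose internal IS count, summed over consistent extensions from the variable gadgets, is a fixed positive constant $a$ when the encoded assignment satisfies $C_j$ and a different positive constant $b \neq a$ when it does not.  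The total value $\is(G_\phi)$ then equals $\sum_{\sigma \in \{0,1\}^n} \prod_{j=1}^m a^{[\sigma \models C_j]} b^{1-[\sigma \models C_j]}$, which is a polynomial in $a$ and $b$ whose coefficients are the numbers of assignments satisfying exactly $k$ clauses.  To recover $\sat(\phi)$ (the coefficient of $a^m$), I would repeat the construction with several different gadget parameters (implementing different effective $a/b$ ratios via vertex-cloning tricks that simulate positive integer weights by pendant paths or twin vertices) and interpolate, using $O(\text{poly}(n,m))$ oracle calls.

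The main obstacle is making this a genuine AP-reduction rather than merely a Turing one: the interpolation step must be \emph{stable}, in the sense that relative error $\delta$ in each oracle answer produces only relative error $\text{poly}(n,m)\cdot\delta$ in the recovered value of $\sat(\phi)$.  The standard way around this is to design the gadgets so that $a/b$ is super-polynomially large (say $2^{p(n,m)}$), so that the coefficient of $a^m$ strictly dominates $\is(G_\phi)$ whenever $\phi$ is satisfiable; one then uses a single query plus a binary search (or one query per clause count, with carefully chosen parameters) rather than full polynomial interpolation.  A complementary subtlety is that $\sat(\phi) = 0$ must be distinguished from $\sat(\phi) \geq 1$, which is handled by separately testing satisfiability via an $\text{NP}$ oracle, legitimate inside an AP-reduction since a randomised approximation scheme for $\is$ allows us to decide membership in $\text{RP}$-accessible sets.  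With these pieces in place, the two directions combine to give $\is \eAP \sat$.
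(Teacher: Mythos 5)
Note first that the paper does not prove this statement at all: it is imported verbatim as Theorem~3 of Dyer, Goldberg, Greenhill and Jerrum \cite{dggj-approx}, so your attempt has to be measured against that proof. Your direction $\is \lAP \sat$ is correct (and is anyway subsumed by the generic fact, used throughout this paper, that every function in $\numP$ is AP-reducible to $\sat$). The genuine gap is in $\sat \lAP \is$. Your primary plan --- clause gadgets contributing $a$ per satisfied clause and $b$ per unsatisfied one, then recovering the coefficient of $a^m$ in $\sum_{\sigma} a^{s(\sigma)}b^{m-s(\sigma)}$ by interpolation --- is the classic template for \emph{exact} $\numP$-hardness, and interpolation is precisely the step that is not approximation-preserving: relative error in the evaluations is amplified without bound in the recovered coefficients. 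You recognise this, and your fallback (make $a/b$ super-polynomially large so the top term dominates) is the right idea and is the engine of every reduction in this paper. But that fallback needs a gadget in which a \emph{satisfied} clause contributes a super-polynomially \emph{larger} factor than an unsatisfied one, and you never construct it. Natural independent-set gadgets do the opposite (an unsatisfied clause leaves more vertices free, hence more independent sets), and there is a structural obstruction: ``this vertex is excluded exactly when all literals of the clause are false'' is a conjunction, whereas adjacency only expresses ``excluded when \emph{some} neighbour is occupied'', a disjunction. Two smaller problems: your variable gadget admits independent sets choosing \emph{neither} literal vertex, so the displayed sum is not what $\IS(G_\phi)$ equals (the promised ``auxiliary vertices'' are unspecified); and the NP-oracle detour for $\SAT(\phi)=0$ is both unnecessary (the dominant-term analysis handles it) and itself an unproved claim as stated.

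The actual proof in \cite{dggj-approx} avoids the clause-gadget obstruction by routing through the intermediate problem of counting \emph{maximum} independent sets: one first shows that $\sat$ is AP-interreducible with counting maximum independent sets (arranging the classical SAT-to-IS construction so that maximum independent sets correspond to satisfying assignments in a known ratio), and then reduces that problem to $\is$ by $r$-thickening every vertex, so that the count for the blown-up graph is $\sum_k N_k(2^r-1)^k$ (where $N_k$ is the number of independent sets of size $k$) and the maximum-size term dominates for polynomially large $r$. In other words, your dominant-term idea is applied to independent-set \emph{size} rather than to the number of satisfied clauses; this is exactly the blow-up-and-dominate pattern used in Theorem~\ref{thm:maxl-bis-hard} and Lemmas~\ref{lem:minsep-core-edge} and~\ref{lem:minsep-core-vertex} of this paper. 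Without an explicit clause gadget achieving $a/b\ge 2^{n+2}$ with $a$ attached to \emph{satisfied} clauses, your sketch does not go through as written.
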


We can now prove Theorem~\ref{thm:maxl-bis-hard}.

{\renewcommand{\thethm}{\ref{thm:maxl-bis-hard}}
\begin{thm}
$\maxlbis \eAP \sat$.
\end{thm}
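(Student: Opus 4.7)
The plan is to establish the equivalence by proving the two AP-reductions separately.

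First, the direction $\maxlbis \lAP \sat$: the problem $\maxlbis$ lies in $\numP$, since given a bipartite graph and a candidate set we can verify in polynomial time whether it is a maximal independent set. Every problem in $\numP$ AP-reduces to $\sat$ by the result of Dyer, Goldberg, Greenhill and Jerrum~\cite{dggj-approx}, so $\maxlbis \lAP \sat$ is essentially immediate.

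For the harder direction $\sat \lAP \maxlbis$, by Theorem~\ref{thm:is-hard} it suffices to prove $\is \lAP \maxlbis$. My approach is to reduce an instance $G = (V, E)$ of $\is$ to a bipartite graph $H$ by attaching a pendant vertex $p_v$ to each vertex $v \in V$ and replacing each edge of $G$ by a carefully chosen gadget. The pendants are meant to force each maximal independent set $I$ of $H$ to contain $p_v$ whenever $v \notin I$, so that $I$ is determined by its restriction $S := I \cap V$ together with the configurations of the edge-gadgets. For the gadget itself, I would try subdividing each edge by an odd number of intermediate vertices (for example by three), which ensures $H$ is bipartite with all original vertices of $V$ on one side, regardless of whether $G$ itself is bipartite.

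The main obstacle is designing the edge-gadget so that the count of maximal independent sets of $H$ actually reveals $\IS(G)$. The difficulty is that in a bipartite graph, any subset of one side of the bipartition is automatically an independent set, so the independence of $S$ in $G$ is not enforced ``for free'' by the bipartite IS structure. Instead, the gadget must encode the constraint that $\{u,v\}\not\subseteq S$ for each $\{u,v\} \in E$, via the pendants' coverage requirements and local maximality conditions around the subdivision vertices. With naive gadgets, every $S \subseteq V$ is extendable to some maximal independent set of $H$, giving the useless count $2^{|V|}$; with somewhat better gadgets one instead obtains a weighted sum of the form $\sum_{S \subseteq V} \alpha^{e(S)}\beta^{|E|-e(S)}$, where $e(S)$ counts edges of $G$ internal to $S$, which is still not directly $\IS(G)$.

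The key step therefore is to design the gadget so that either (a) the number of local extensions at an edge $\{u,v\}$ is a uniform constant when $\{u,v\}\not\subseteq S$ and $0$ otherwise, making $|\textnormal{MaxlIS}(H)| = c^{|E|}\IS(G)$ for a known constant $c$; or else (b) one can make polynomially many oracle calls to $\maxlbis$ on graphs built from parameterised gadgets and extract $\IS(G)$ by a polynomial-degree interpolation, in the style of standard AP-reductions. Option (a) gives a cleaner reduction if achievable, and this is what I would aim for; option (b) is the fallback, and the AP-reduction framework is designed to tolerate the error that approximate oracle answers introduce into such interpolations.
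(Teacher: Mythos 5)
Your skeleton matches the paper's construction (pendant vertices to force $T\cap V^* = \{v_i : i\notin S\}$, plus an odd-length subdivision of each edge to preserve bipartiteness), and you have correctly located the crux. But the resolution you propose does not work, and you explicitly dismiss the form that does. Your option (a) --- a gadget whose number of local extensions is $0$ when $\{u,v\}\subseteq S$ --- is unachievable: since the edge subdivisions make the vertices of $V$ pairwise non-adjacent in $H$, \emph{every} $S\subseteq V$ together with the pendants $\{p_v : v\notin S\}$ is an independent set of $H$ and extends greedily to a maximal independent set $T$ with $T\cap V = S$, so $\MIS_S(H)\ge 1$ for all $S$ and no local count can vanish. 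Your option (b), polynomial interpolation, is not available in this setting: interpolation recovers coefficients from exact evaluations, and the small relative errors returned by an approximate oracle are not, in general, tolerated; this is precisely why AP-reductions avoid interpolation in favour of making a single dominant term.

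The missing idea is amplification by thickening. The paper accepts exactly the ``weighted sum'' you reject: with a $4$-stretch of each edge $e=\{i,j\}$ into paths $i\,x\,y\,z\,j$, the local count per path is $1$ if both $i,j\in S$ (forced to $\{y\}$) and $2$ otherwise, so after replacing each edge by $t$ parallel such paths one gets $\MIS_S(G') = 2^{(m-\mu(S))t}$, where $\mu(S)$ is the number of edges inside $S$, i.e.\ your sum with $\alpha=1$, $\beta=2^t$. Choosing $t=n+2$ makes the at most $2^n$ non-independent sets (each with $\mu(S)\ge 1$) contribute at most $2^n 2^{(m-1)t} \le \tfrac14\cdot 2^{mt}$ in total, so
\[
\IS(G) \;\le\; \frac{\MIS(G')}{2^{tm}} \;\le\; \IS(G) + \tfrac14 ,
\]
and a single oracle call with suitably tuned accuracy recovers $\IS(G)$. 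Without this thickening step (or some equivalent way of making the independent-set terms dominate), your reduction does not go through.
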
}

\begin{figure}[t]
\begin{center}
\begin{tikzpicture}[scale=0.8]
\node (Ga) at (0,4) [vertex, label = above left:$1$]  {};
\node (Gb) at (4,4) [vertex, label = above right:$2$] {};
\node (Gc) at (4,0) [vertex, label = below right:$3$] {};
\node (Gd) at (0,0) [vertex, label = below left:$4$]  {};
\draw (Ga) -- (Gb) -- (Gc) -- (Gd) -- (Ga) -- (Gc);

\node (G'a) at (8,4)  [vertex, label = left:$1$]   {};
\node (G'b) at (12,4) [vertex, label = right: $2$] {};
\node (G'c) at (12,0) [vertex, label = right: $3$] {};
\node (G'd) at (8,0)  [vertex, label = left: $4$]  {};

\node (G'a') at (6.5,5.5)   [vertex, color=blue, label = above left:$v_1$]  {};
\node (G'b') at (13.5,5.5)  [vertex, color=blue, label = above right:$v_2$] {};
\node (G'c') at (13.5,-1.5) [vertex, color=blue, label = below right:$v_3$] {};
\node (G'd') at (6.5,-1.5)  [vertex, color=blue, label = below left:$v_4$]  {};

\def \x {2.5mm}
\node (G'ab11) at (9,4)  [smallvertex] {};
\node (G'ab21) at (10,4) [smallvertex] {};
\node (G'ab31) at (11,4) [smallvertex] {};
\node (G'ab13) at (9,5)  [smallvertex] {};
\node (G'ab23) at (10,5) [smallvertex] {};
\node (G'ab33) at (11,5) [smallvertex] {};
\node (G'ab12) at ($(G'ab13)!\x!(G'ab11)$) [smallvertex] {};
\node (G'ab22) at ($(G'ab23)!\x!(G'ab21)$) [smallvertex] {};
\node (G'ab32) at ($(G'ab33)!\x!(G'ab31)$) [smallvertex] {};

\node (G'bc11) at (12,3) [smallvertex] {};
\node (G'bc21) at (12,2) [smallvertex] {};
\node (G'bc31) at (12,1) [smallvertex] {};
\node (G'bc13) at (13,3) [smallvertex] {};
\node (G'bc23) at (13,2) [smallvertex] {};
\node (G'bc33) at (13,1) [smallvertex] {};
\node (G'bc12) at ($(G'bc13)!\x!(G'bc11)$) [smallvertex] {};
\node (G'bc22) at ($(G'bc23)!\x!(G'bc21)$) [smallvertex] {};
\node (G'bc32) at ($(G'bc33)!\x!(G'bc31)$) [smallvertex] {};

\node (G'cd11) at (11,0)  [smallvertex] {};
\node (G'cd21) at (10,0)  [smallvertex] {};
\node (G'cd31) at (9,0)   [smallvertex] {};
\node (G'cd13) at (11,-1) [smallvertex] {};
\node (G'cd23) at (10,-1) [smallvertex] {};
\node (G'cd33) at (9,-1)  [smallvertex] {};
\node (G'cd12) at ($(G'cd13)!\x!(G'cd11)$) [smallvertex] {};
\node (G'cd22) at ($(G'cd23)!\x!(G'cd21)$) [smallvertex] {};
\node (G'cd32) at ($(G'cd33)!\x!(G'cd31)$) [smallvertex] {};

\node (G'da11) at (8,1) [smallvertex] {};
\node (G'da21) at (8,2) [smallvertex] {};
\node (G'da31) at (8,3) [smallvertex] {};
\node (G'da13) at (7,1) [smallvertex] {};
\node (G'da23) at (7,2) [smallvertex] {};
\node (G'da33) at (7,3) [smallvertex] {};
\node (G'da12) at ($(G'da13)!\x!(G'da11)$) [smallvertex] {};
\node (G'da22) at ($(G'da23)!\x!(G'da21)$) [smallvertex] {};
\node (G'da32) at ($(G'da33)!\x!(G'da31)$) [smallvertex] {};

\def \y {4mm}
\node (G'ac11) at ($(G'a)!0.25!(G'c)!\y!270:(G'c)$) [smallvertex] {};
\node (G'ac21) at ($(G'a)!0.5 !(G'c)!\y!270:(G'c)$) [smallvertex] {};
\node (G'ac31) at ($(G'a)!0.75!(G'c)!\y!270:(G'c)$) [smallvertex] {};

\node (G'ac12) at ($(G'a)!0.25!(G'c)!\y!90:(G'c)$) [smallvertex] {};
\node (G'ac22) at ($(G'a)!0.5 !(G'c)!\y!90:(G'c)$) [smallvertex] {};
\node (G'ac32) at ($(G'a)!0.75!(G'c)!\y!90:(G'c)$) [smallvertex] {};

\draw (G'a) -- (G'ab11) -- (G'ab21) -- (G'ab31) -- (G'b);
\draw (G'a) -- (G'ab12) -- (G'ab22) -- (G'ab32) -- (G'b);
\draw (G'a) -- (G'ab13) -- (G'ab23) -- (G'ab33) -- (G'b);

\draw (G'b) -- (G'bc11) -- (G'bc21) -- (G'bc31) -- (G'c);
\draw (G'b) -- (G'bc12) -- (G'bc22) -- (G'bc32) -- (G'c);
\draw (G'b) -- (G'bc13) -- (G'bc23) -- (G'bc33) -- (G'c);

\draw (G'c) -- (G'cd11) -- (G'cd21) -- (G'cd31) -- (G'd);
\draw (G'c) -- (G'cd12) -- (G'cd22) -- (G'cd32) -- (G'd);
\draw (G'c) -- (G'cd13) -- (G'cd23) -- (G'cd33) -- (G'd);

\draw (G'd) -- (G'da11) -- (G'da21) -- (G'da31) -- (G'a);
\draw (G'd) -- (G'da12) -- (G'da22) -- (G'da32) -- (G'a);
\draw (G'd) -- (G'da13) -- (G'da23) -- (G'da33) -- (G'a);

\draw (G'a) -- (G'ac11) -- (G'ac21) -- (G'ac31) -- (G'c);
\draw (G'a) -- (G'ac12) -- (G'ac22) -- (G'ac32) -- (G'c);

\draw [color=blue] (G'a) -- (G'a');
\draw [color=blue] (G'b) -- (G'b');
\draw [color=blue] (G'c) -- (G'c');
\draw [color=blue] (G'd) -- (G'd');

\def \ew {1mm}
\node (G'abdot2) at ($(G'ab21)!0.5!(G'ab22)$)   [ellipses] {};
\node (G'abdot1) at ($(G'abdot2)!\ew!(G'ab21)$) [ellipses] {};
\node (G'abdot3) at ($(G'abdot2)!\ew!(G'ab22)$) [ellipses] {};

\node (G'bcdot2) at ($(G'bc21)!0.5!(G'bc22)$)   [ellipses] {};
\node (G'bcdot1) at ($(G'bcdot2)!\ew!(G'bc21)$) [ellipses] {};
\node (G'bcdot3) at ($(G'bcdot2)!\ew!(G'bc22)$) [ellipses] {};

\node (G'cddot2) at ($(G'cd21)!0.5!(G'cd22)$)   [ellipses] {};
\node (G'cddot1) at ($(G'cddot2)!\ew!(G'cd21)$) [ellipses] {};
\node (G'cddot3) at ($(G'cddot2)!\ew!(G'cd22)$) [ellipses] {};

\node (G'dadot2) at ($(G'da21)!0.5!(G'da22)$)   [ellipses] {};
\node (G'dadot1) at ($(G'dadot2)!\ew!(G'da21)$) [ellipses] {};
\node (G'dadot3) at ($(G'dadot2)!\ew!(G'da22)$) [ellipses] {};

\node (G'acdot2) at ($(G'ac21)!0.5!(G'ac22)$)   [ellipses] {};
\node (G'acdot1) at ($(G'acdot2)!\ew!(G'ac21)$) [ellipses] {};
\node (G'acdot3) at ($(G'acdot2)!\ew!(G'ac22)$) [ellipses] {};

\node at (Gb)   [occupied] {};

\node at (G'a') [occupied] {};
\node at (G'b)  [occupied] {};
\node at (G'c') [occupied] {};
\node at (G'd') [occupied] {};

\node at (G'ab21) [smalloccupied] {};
\node at (G'ab12) [smalloccupied] {};
\node at (G'ab23) [smalloccupied] {};
\node at (G'bc21) [smalloccupied] {};
\node at (G'bc22) [smalloccupied] {};
\node at (G'bc33) [smalloccupied] {};
\node at (G'cd21) [smalloccupied] {};
\node at (G'cd22) [smalloccupied] {};
\node at (G'cd23) [smalloccupied] {};
\node at (G'da21) [smalloccupied] {};
\node at (G'da22) [smalloccupied] {};
\node at (G'da23) [smalloccupied] {};
\node at (G'ac21) [smalloccupied] {};
\node at (G'ac12) [smalloccupied] {};
\node at (G'ac32) [smalloccupied] {};

\node (Glabel)  at (2, 6.5)  [font=\Large] {$G$};
\node (G'label) at (10, 6.5) [font=\Large] {$G'$};

\end{tikzpicture}
\end{center}

\caption{An example of the reduction from an instance $G$ of \is to an instance $G'$ of \maxlbis used in the proof of Theorem~\ref{thm:maxl-bis-hard}. The boxes around vertices indicate a non-maximal independent set in $G$ and one of its maximal counterparts in $G'$. Note in particular how the presence of $v_4$ ensures that vertex 4 has an occupied neighbour in $G'$.}
\label{fig:maxlbis}
\end{figure}
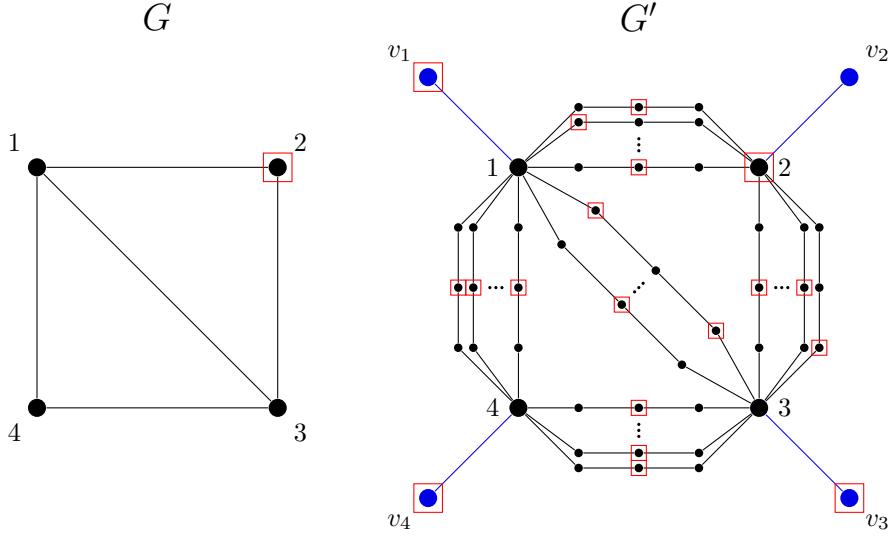

\begin{proof}
Since $\maxlbis$ is in $\numP$,  $\maxlbis \lAP \sat$
follows from \cite{dggj-approx}. To go the other direction, we will show $\is \lAP \maxlbis$.
Let $\MIS(G)$ denote the number of maximal independent sets in a graph~$G$.
Let $G=(V,E)$ be an instance of \is. Without loss of generality let $V  = [n]$ for some $n \in \N$, let $m = |E|$, and let $t = n+2$. We shall construct an instance $G'$ of \maxlbis with the property that
 $
\IS(G) \le \MIS(G')/2^{tm} \le \IS(G) + \frac{1}{4}$,
which  will be sufficient for the reduction. See Figure~\ref{fig:maxlbis} for an example.

Informally, we obtain a bipartite graph~$G'$ 
(an instance of \maxlbis) from $G$ by first $t$-thickening and 
then $4$-stretching each of $G$'s edges 
and by also
adding a bristle to each of $G$'s vertices. Formally, we define $G'$ as follows. For each 
$e \in E$ let $X_e$, $Y_e$ and $Z_e$ be sets of $t$ vertices. We require all of these sets to be disjoint from each other and from $[n]$.
Write $X_e = \{x_{e}^k  \mid k \in [t]\}$, $Y_{e} =\{y_{e}^k \mid  k \in [t]\}$, and $Z_{e} =\{z_{e}^k \mid k \in [t]\}$.
Also, let $W = \bigcup_{e \in E} X_e \cup Y_e \cup Z_e$.
Let $V^*=\{v_1, \dots, v_n\}$ be 
a set of distinct vertices which is
disjoint from $[n] \cup   W$. Then we define 
\begin{align*}
V(G') & = [n] \cup V^*
            \cup W, \\
E(G') & =  \{\{i,v_i\}\mid i \in [n] \} \cup \bigcup_{\substack{e=\{i,j\} \in E \\ i < j \\ k \in [t]}} 
                \{\{i,x_e^k\}, \{x_e^k,y_e^k\}, \{y_e^k,z_e^k\},
                \{z_e^k,j\}\}.
\end{align*}

Let $S \subseteq [n]$ be an arbitrary set. We shall determine the number $\MIS_S(G')$ of maximal independent sets $T \subseteq V(G')$ with $T \cap [n] = S$, and thereby bound $\MIS(G')$.
 
First, note that for every $S \subseteq [n]$, the set
$S \cup \{v_i \in V^* \mid i\not\in S\} \cup \bigcup_e Y_e$
is a maximal independent set of~$G'$, so
$\MIS_S(G')$ is non-zero.
Also, if $T$ is a maximal independent set of~$G'$ and $T\cap [n]=S$ then
$T\cap V^* = \{v_i \in V^* \mid i\not\in S\}$. In particular, this implies that every unoccupied vertex in $[n]$ has an occupied neighbour in $V^*$.
  
Consider an edge $e=\{i,j\}\in E$, where $i < j$, and a value $k\in[t]$.
If $T$ is a maximal independent set of~$G'$ containing both~$i$
and~$j$ then $T \cap  \{x_e^k,y_e^k,z_e^k\} = \{y_e^k\}$.
On the other hand, if $T$ is a maximal independent set of~$G'$ 
containing~$i$ but not~$j$
then $T\cap \{x_e^k,y_e^k,z_e^k\}$ can either be $\{y_e^k\}$ or $\{z_e^k\}$.
This choice can be made independently for each~$k\in[t]$.
Similarly, if $T$ is a maximal independent set of~$G'$
containing neither of~$i$ and~$j$ then
$T \cap \{x_e^k,y_e^k,z_e^k\}$ can either be $\{x_e^k,z_e^k\}$, or $\{y_e^k\}$.

Given $S\subseteq [n]$, let $\mu(S)$ be the number of edges of~$G$
with both endpoints in~$S$.
We conclude from the previous observations that  $\MIS_S(G') =  2^{(m - \mu(S)) t}$
so $\MIS(G') = \sum_{S\subseteq [n]} 2^{(m - \mu(S)) t}$.
Since each independent set $S$ of~$G$
has $\mu(S)=0$, 
$\MIS(G') \geq \IS(G) 2^{mt}$.
Furthermore,
since there are at most $2^n$ sets $S\subseteq[n]$ that are not independent sets of~$G$,
and each of these has $\mu(S)\geq 1$, we have

\begin{equation}
\label{target} 
\IS(G) \le \frac{\MIS(G')}{2^{tm}} \le \IS(G) + 2^n  2^{-t} = \IS(G) + \frac{1}{4}.
\end{equation}
Equation~(\ref{target}) implies that there is an AP-reduction from \is\ to \maxlbis.
The details of the reduction showing
how to tune the accuracy parameter 
in the oracle call for approximating $\MIS(G')$ in order to get a sufficiently good 
approximation to $\IS(G)$  are exactly as in the proof of Theorem~3 of~\cite{dggj-approx}. 

\end{proof}

\section{Minimal separator problems}\label{sec:minsep}

\subsection{Two intermediate problems}

In this section, we shall present hardness proofs for two intermediate problems. We will then subsequently use these problems as reduction targets in our proofs of Theorems~\ref{thm:ms-vertex-hard}
and~\ref{thm:ms-edge-hard}. We first explicitly generalise Definitions~\ref{defn:edge-separators} and~\ref{defn:vertex-separators} to multigraphs in the natural way. We avoided doing so in the introduction because the graph separator problems that we have previously defined are trivially equivalent to their multigraph variants --- we will only use these definitions for intermediate problems.

\begin{defn}
Let $G = (V,E)$ be a multigraph, and let $s,t \in V(G)$. A multiset $F \subseteq E$ is an \emph{$(s,t)$-edge separator} of $G$ if $s$ and $t$ lie in different components of $G - F$. We say $F$ is a \emph{minimal $(s,t)$-edge separator} if no proper submultiset of $F$ is an $(s,t)$-edge separator, and write $\MES(G,s,t)$ for the number of minimal $(s,t)$-edge separators of $G$. 
\end{defn}

\begin{defn}
Let $G = (V,E)$ be a multigraph, and let $F \subseteq E$. We say $F$ is a \emph{minimal edge separator} if it is a minimal $(s,t)$-edge separator for some $s,t \in V$, and write $\MES(G)$ for the number of minimal edge separators of $G$.
\end{defn}

We now define our two intermediate problems.

\begin{prob}\label{prob:largemes}
\myprob{\largemes}{A multigraph $G$ and the maximum cardinality $x$ of any minimal edge separator in $G$}{The number of minimal edge separators of $G$ with maximum cardinality, which we denote by $\TMES(G)$}
\end{prob}

\begin{prob}\label{prob:largestmes}
\myprob{\largestmes}{A multigraph $G$, two distinct vertices $s,t \in V$, and the maximum cardinality $y$ of any minimal $(s,t)$-edge separator in $G$}{The number of minimal $(s,t)$-edge separators of $G$ with maximum cardinality, which we denote by $\TMES(G,s,t)$}
\end{prob}

Note that the input restrictions in the definitions of \largemes and \largestmes are motivated purely by their uses as intermediate problems in reductions. When we use them, we will be able to prove that their respective promises are satisfied. As the next proposition shows, both \largemes and \largestmes can be expressed in terms of vertex cuts. It is a widely known result and was first proved by Whitney~\cite{whitney-seps} --- we give a proof here for completeness.

\begin{prop}\label{prop:max-minl-seps-are-cuts}
Let $G = (V,E)$ be a connected multigraph. Then a multiset $F \subseteq E$ is a minimal edge separator of $G$ if and only if $G-F$ has exactly two non-empty components, and $F$ is the multiset of edges between them.
\end{prop}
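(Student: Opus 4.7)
The plan is to prove the two directions separately, using the minimality property to pin down the structure of the edges of $F$ relative to the components of $G-F$.

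For the forward direction, I will suppose that $F$ is a minimal edge separator, so that by definition $F$ is a minimal $(s,t)$-edge separator for some $s,t \in V$. The first step is to argue that every edge $e \in F$ must have one endpoint in the component of $s$ in $G-F$ and the other in the component of $t$ in $G-F$. This follows from minimality: $F \setminus \{e\}$ fails to separate $s$ from $t$, so there is an $s$-$t$ walk in $G-(F \setminus \{e\})$, and any such walk must use $e$; tracing the walk up to its first use of $e$ and from its last use of $e$ exhibits the endpoints of $e$ as lying in the $s$- and $t$-components of $G-F$ respectively. Once this is established, I will use the connectedness of $G$ to rule out a third component of $G-F$: any such component would need to be joined to the rest of $G$ by some edges, all of which lie in $F$, but we have just shown that every edge of $F$ lies between the $s$-component and the $t$-component. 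Hence $G-F$ has exactly these two components $A$ and $B$, and since any edge of $G$ from $A$ to $B$ must be in $F$ (else $s$ and $t$ would be connected in $G-F$), $F$ is precisely the multiset of edges of $G$ between $A$ and $B$.

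For the backward direction, I will assume $G-F$ has exactly two non-empty components $A,B$ and $F$ consists of the edges of $G$ between them. Picking any $s \in A$ and $t \in B$, $F$ is clearly an $(s,t)$-edge separator. For minimality, take any $e \in F$ with endpoints $u \in A$ and $v \in B$; since $A$ and $B$ are each connected (as components of $G-F$), I can concatenate an $s$-$u$ walk inside $A$, the edge $e$, and a $v$-$t$ walk inside $B$ to produce an $s$-$t$ walk in $G-(F \setminus \{e\})$, so $F \setminus \{e\}$ is not an $(s,t)$-edge separator. Thus $F$ is a minimal $(s,t)$-edge separator, and hence a minimal edge separator.

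The only subtle step is the rule-out of additional components in the forward direction, which genuinely uses that $G$ is connected (without this hypothesis, an isolated extra component of $G$ would survive in $G-F$ and spoil the conclusion). Everything else is a direct unpacking of the definitions, and the multigraph generalisation requires no new ideas because we never need to appeal to simplicity of the edge set.
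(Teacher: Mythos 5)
Your proof is correct and follows essentially the same route as the paper's: both directions are direct unpackings of the definitions, with minimality plus connectedness of $G$ used to exclude a third component of $G-F$. The only difference is organisational — you first show every edge of $F$ joins the $s$-component to the $t$-component and then derive the contradiction, whereas the paper directly exhibits a removable edge of $F$ incident to a hypothetical third component; the two arguments are logically equivalent.
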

\begin{proof}
For any non-empty set $S \subset V$ such that $G[S]$ and $G[V \setminus S]$ are connected, taking an arbitrary $s \in S$ and $t \in V \setminus S$, it is immediate that the multiset of edges between $S$ and $V \setminus S$ is a minimal $(s,t)$-edge separator and hence a minimal edge separator. 

Conversely, let $F \subseteq E(G)$ be a minimal $(s,t)$-edge separator for some $s,t \in V$. Suppose $G-F$ has (at least) three components $C_1$, $C_2$ and $C_3$. Without loss of generality, suppose $s \in C_1$ and $t \in C_2$. Then since $G$ is connected, $F$ must contain an edge $e$ from $C_1 \cup C_2$ to $C_3$. But then $F \setminus \{e\}$ is still an $(s,t)$-edge separator, contradicting minimality. Hence $G-F$ has only two components, as required.
\end{proof}

Thus we may view counting maximum minimal edge separators as counting maximum vertex cuts subject to the requirement that each part of the vertex cut is connected. We shall therefore prove hardness for \largemes and \largestmes by adapting a folklore proof that MAX-CUT is NP-complete (see e.g. Exercise 7.25 of Sipser~\cite{sipser}).  The original proof works by reduction from 3-NAE-SAT -- we shall instead reduce from the following variant of the problem.

\begin{defn}
We define \nae to be a logical clause as follows. Let $x_1$, $x_2$ and $x_3$ be literals, and let $\sigma:\{x_1, x_2, x_3\} \rightarrow \{0,1\}$ be a truth assignment. Then under $\sigma$, 
\[\nae(x_1, x_2, x_3) = \begin{cases}0 & \textnormal{ if } \sigma(x_1) = \sigma(x_2) = \sigma(x_3) \\ 1 & \textnormal{ otherwise}.\end{cases}\]
\end{defn}
\begin{defn}
We define a \emph{monotone \tnae} formula $\phi$ to be any logical formula of the form $\bigwedge_{i \in [k]}\C_i$, where $k \in \N$ and $\C_1, \dots, \C_k$ are \nae clauses containing three distinct and un-negated literals, e.g.\ three distinct variables.
\end{defn}

\begin{prob}\label{prob:tnaesat}
\myprob{\tnaesat}{A satisfiable monotone \tnae formula $\phi$}{The number of satisfying assignments of $\phi$}
\end{prob}

We first prove hardness for \tnaesat by reduction from \is.

\begin{lem}\label{lem:tnaesat-hard}
\tnaesat is \sat-hard to approximate and is \numP-complete.
\end{lem}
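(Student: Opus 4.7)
The plan is to give an exact (parsimonious up to a factor of~$2$) reduction from $\is$ to $\tnaesat$, which will simultaneously establish both statements of the lemma: $\sat$-hardness of approximation follows because $\is \eAP \sat$ by Theorem~\ref{thm:is-hard}, and $\numP$-hardness follows because $\is$ is $\numP$-hard. Membership in $\numP$ is immediate since a satisfying assignment can be verified in polynomial time.

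Given an instance $G=(V,E)$ of $\is$, if $E = \emptyset$ we output $2^{|V|}$ directly. Otherwise we introduce a fresh variable $z$ together with a variable $x_v$ for each $v \in V$, and form
\[
\phi_G \;=\; \bigwedge_{\{u,v\}\in E}\nae(x_u, x_v, z).
\]
Each clause involves three pairwise distinct, un-negated variables (since $G$ is simple and $z$ is fresh), so $\phi_G$ is a monotone $\tnae$ formula. It meets the satisfiability promise of $\tnaesat$: the assignment $x_v=0$ for every $v$ together with $z=1$ makes each clause $\nae(0,0,1)=1$.

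The core claim is that $\phi_G$ has exactly $2\IS(G)$ satisfying assignments, proved by partitioning on the value of~$z$. When $z=1$, a clause $\nae(x_u,x_v,1)$ is satisfied iff not all three entries are equal iff $(x_u,x_v)\neq(1,1)$, so satisfying assignments with $z=1$ correspond bijectively to independent sets $S = \{v : x_v = 1\}$ of $G$, giving $\IS(G)$ of them. When $z=0$, the symmetry $\nae(a,b,c) = \nae(1-a,1-b,1-c)$ gives a bijection between satisfying assignments with $z=0$ and those with $z=1$ by flipping every variable, yielding another $\IS(G)$. Summing gives the claim. The desired AP-reduction is then standard: given an FPRAS for $\tnaesat$ with tolerance $\varepsilon$, query it on $\phi_G$ with tolerance $\varepsilon$ and return half the answer; this is a tolerance-$\varepsilon$ estimate of $\IS(G)$, so $\is \lAP \tnaesat$. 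The same identity gives a polynomial-time Turing reduction for exact counting, establishing $\numP$-hardness.

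There is no substantial obstacle: the reduction is essentially the textbook $3$-NAE-SAT encoding of independent sets via a reference variable, and the factor-of-$2$ symmetry is exactly what allows the encoding to remain \emph{monotone}. The only subtleties are (i) ensuring the three literals in each clause are distinct variables, which is automatic because $G$ has no loops and $z$ is fresh, (ii) verifying the satisfiability promise, which is done by exhibiting the all-zero/$z{=}1$ witness, and (iii) handling the empty-edge case directly, since the definition of monotone $\tnae$ formulas requires $k \in \N = \{1,2,\dots\}$.
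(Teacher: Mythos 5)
Your proposal is correct and follows essentially the same route as the paper: a reduction from $\is$ via a fresh reference variable $x$ (your $z$) with one clause $\nae(i,j,x)$ per edge, the satisfiability witness given by the all-zero assignment with the reference variable set to $1$, and the two-to-one correspondence via the global flip $\sigma \mapsto 1-\sigma$. The only (minor) point where you are more careful than the paper is in handling the case $E=\emptyset$ explicitly.
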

\begin{proof}
For every instance $\phi$ of \tnaesat, let $\SAT(\phi)$ be the number of satisfying assignments of $\phi$. Since \tnaesat is in \numP, we have $\tnaesat \lAP \sat$ by~\cite{dggj-approx}. Let $G=(V,E)$ be an instance of \is, which is hard by Theorem~\ref{thm:is-hard}. We shall construct an instance $\phi$ of \tnaesat with the property that $\SAT(\phi) = 2 \cdot \IS(G)$, from which the result follows immediately.

We identify $V$ with a set of logical variables. Let $x$ be a new variable distinct from the variables in $V$.  Then we define
\[\phi = \bigwedge_{\{i,j\} \in E} \nae(i,j,x).\]
Note that $\phi$ is satisfiable by setting $x$ to 1 and all other variables to 0, so $\phi$ is an instance of \tnaesat. 

Suppose $\sigma:V\cup\{x\} \rightarrow \{0,1\}$ is a satisfying assignment of $\phi$. Then we may define an independent set $S$ as follows.
\[S = \{v \in V \mid \sigma(v) = \sigma(x)\}.\]
Since $\sigma$ is a satisfying assignment, we cannot have $\sigma(i) = \sigma(j) = \sigma(x)$ for any $\{i,j\} \in E$, and so $S$ is an independent set.

Conversely, suppose $S$ is an independent set of $G$ and let $1_S$ be the indicator function of $S$. Then $S$ corresponds to two satisfying assignments $\sigma_0, \sigma_1:V\cup \{x\}\rightarrow \{0,1\}$ of $\phi$. Indeed, let $\sigma_1(x) = 1$, and let $\sigma_1(v) = 1_S(v)$ for all $v \in V$. Then $\sigma_1$ satisfies every clause $\nae(i,j,x)$ of $\phi$, since $\sigma_1(x) = 1$ and at most one of $i$ and $j$ lies in $S$. We then define $\sigma_0 = 1 - \sigma_1$, which is a satisfying assigmnent since $\sigma_1$ is a satisfying assignment.

Thus each satisfying assignment of $\phi$ corresponds to a unique independent set of $G$, and each independent set of $G$ corresponds to exactly two satisfying assignments of $\phi$. The result therefore follows.
\end{proof}

We now reduce \tnaesat to \largemes and \largestmes.

\begin{lem}\label{lem:largeminsep-hard}
\largemes and \largestmes are \sat-hard to approximate and are \numP-complete.
\end{lem}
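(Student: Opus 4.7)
The plan is to reduce \tnaesat to both \largemes and \largestmes, adapting the standard reduction from 3-NAE-SAT to MAX-CUT by adding two ``hub'' vertices that enforce connectivity of the two parts of any cut. Combined with Lemma~\ref{lem:tnaesat-hard}, and the fact that both problems lie in \numP (so \cite{dggj-approx} provides the AP-reduction to \sat automatically), this will establish the lemma.

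Given an instance $\phi$ of \tnaesat with variable set $V = \{x_1, \dots, x_n\}$ and $m$ clauses, I would construct a multigraph $G$ on $V \cup \{s, t\}$ as follows. For each clause $\nae(x_i, x_j, x_\ell)$, add the triangle edges $\{x_i, x_j\}, \{x_j, x_\ell\}, \{x_i, x_\ell\}$. For each variable $x_i$, add single edges $\{s, x_i\}$ and $\{t, x_i\}$. Finally, add $M := n + 2m + 1$ parallel edges between $s$ and $t$. I would call the \largemes oracle on $(G, x)$ and the \largestmes oracle on $(G, s, t, y)$, each with claimed maximum cardinality $x = y := 2n + 4m + 1$.

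By Proposition~\ref{prop:max-minl-seps-are-cuts}, every minimal edge separator of $G$ is the multiset of edges crossing a bipartition $\{A, B\}$ of $V(G)$ into two non-empty connected parts. If $s \in A$ and $t \in B$, both parts are automatically connected through the hubs, and there are exactly $2^n$ such bipartitions, in bijection with assignments $\sigma:V\to\{0,1\}$ via $A = \{s\} \cup \sigma^{-1}(0)$. The corresponding cut has size $M + n + 2 k(\sigma)$, where $M$ counts the $(s,t)$-edges, $n$ counts the $\{s,x_i\}$- and $\{t,x_i\}$-edges (exactly one per variable crosses), and $2k(\sigma)$ counts the clause-triangle contribution, since a triangle has cut $2$ iff its three vertices are not all on the same side, equivalently iff the corresponding clause is NAE-satisfied by $\sigma$. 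For bipartitions with $s, t$ on the same side, no $(s,t)$-edge crosses, so the cut has size at most $2n + 2m$. Since $M + n > 2n + 2m$ by the choice of $M$, only bipartitions of the first type can realise the maximum cut, which equals $M + n + 2m = 2n + 4m + 1$ and is achieved precisely by NAE-satisfying $\sigma$ (existence of which is guaranteed by the \tnaesat promise). Hence $\TMES(G) = \TMES(G, s, t) = \SAT(\phi)$, and since this identity is exact the same construction doubles as a Turing reduction witnessing \numP-completeness.

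The main obstacle is calibrating $M$: it must be large enough to force $s$ and $t$ onto opposite sides in every maximum cut, yet the cut-size formula must remain transparent enough to read off the bijection between NAE-satisfying assignments and maximum minimal edge separators. The bound $M = n + 2m + 1$ above meets both needs, and the ``separating partitions are automatically connected via the hubs'' observation sidesteps the connectivity subtleties that would otherwise complicate the counting.
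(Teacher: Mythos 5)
Your proposal is correct, and it follows the same overall strategy as the paper --- reduce from \tnaesat, and use Proposition~\ref{prop:max-minl-seps-are-cuts} to identify maximum minimal edge separators with maximum cuts whose two sides are connected --- but the gadget is genuinely different. The paper builds $G$ on the $2n$ literal vertices $x_i,\overline{x_i}$, with an edge inside each literal pair, a clause triangle per clause, and a spanning clique $V^{(2)}$; the clique does double duty, guaranteeing that both sides of every cut are connected and forcing $|S|=n$ (hence one literal per pair on each side) at the maximum, and the resulting correspondence is two-to-one, $\SAT(\phi)=2\cdot\TMES(G)$. You instead keep only the $n$ variable vertices, add two hubs $s,t$ joined to every variable, and force $s$ and $t$ apart with a bundle of $M=n+2m+1$ parallel edges; the hubs handle connectivity, the bundle handles the forcing, and because $s$ always sits on a distinguished side the correspondence becomes one-to-one, $\TMES(G)=\TMES(G,s,t)=\SAT(\phi)$. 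Your arithmetic checks out ($M+n=2n+2m+1$ strictly exceeds the $2n+2m$ ceiling for any cut leaving $s$ and $t$ together, and satisfiability of $\phi$ guarantees the promise values $x=y=2n+4m+1$ are attained), and your version makes the \largestmes case immediate since the distinguished pair is the hub pair rather than $x_1,\overline{x_1}$. The only cosmetic point is that the clause triangles should be taken as a multiset union (as in the paper's $\biguplus_i C_i^{(2)}$) so that clauses sharing a pair of variables contribute parallel edges; since you are already in the multigraph setting this is harmless.
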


\begin{figure}
\begin{center}
\begin{tikzpicture}[scale=0.75]

\matrix[matrix of nodes, inner sep=0pt, column sep=0pt] at (0, 0) {
      \node (phistart) {\strut NAE$($}; &
      \node (phi1x1)   {\strut $x_1$};  &
      \node (phi1x1x2) {\strut $,$};    &
      \node (phi1x2)   {\strut $x_2$};  &
      \node (phi1x2x3) {\strut $,$};    &   
      \node (phi1x3)   {\strut $x_3$};  &
      \node (phi12)    {\strut $)\wedge \textnormal{NAE}($}; &
      \node (phi2x3)   {\strut $x_3$};  &
      \node (phi2x3x4) {\strut $,$};    &
      \node (phi2x4)   {\strut $x_4$};  &
      \node (phi2x4x5) {\strut $,$};    &   
      \node (phi2x5)   {\strut $x_5$};  &      
      \node (phiend)   {\strut $)$}; \\
};

\def \rowsepa {1.5};
\def \rowsepb {0.5};
\node (sig11) at ($(phi1x1) - (0,\rowsepa)$) {\strut 1};
\node (sig12) at ($(phi1x2) - (0,\rowsepa)$) {\strut 1};
\node (sig13) at ($(phi1x3) - (0,\rowsepa)$) {\strut 0};
\node (sig14) at ($(phi2x3) - (0,\rowsepa)$) {\strut 0};
\node (sig15) at ($(phi2x4) - (0,\rowsepa)$) {\strut 1};
\node (sig16) at ($(phi2x5) - (0,\rowsepa)$) {\strut 0};

\node (sig21) at ($(phi1x1) - (0,\rowsepa+\rowsepb)$) {\strut 0};
\node (sig22) at ($(phi1x2) - (0,\rowsepa+\rowsepb)$) {\strut 0};
\node (sig23) at ($(phi1x3) - (0,\rowsepa+\rowsepb)$) {\strut 1};
\node (sig24) at ($(phi2x3) - (0,\rowsepa+\rowsepb)$) {\strut 1};
\node (sig25) at ($(phi2x4) - (0,\rowsepa+\rowsepb)$) {\strut 0};
\node (sig26) at ($(phi2x5) - (0,\rowsepa+\rowsepb)$) {\strut 1};

\node (sig1) at ($(sig11) - (0.75,0)$) {\strut $\sigma_S:$};
\node (sig2) at ($(sig21) - (0.75,0)$) {\strut $\sigma_{\overline{S}}:$};

\draw [->] ($(sig11)!0.3!(phi1x1)$) -- ($(sig11)!0.7!(phi1x1)$);
\draw [->] ($(sig12)!0.3!(phi1x2)$) -- ($(sig12)!0.7!(phi1x2)$);
\draw [->] ($(sig13)!0.3!(phi1x3)$) -- ($(sig13)!0.7!(phi1x3)$);
\draw [->] ($(sig14)!0.3!(phi2x3)$) -- ($(sig14)!0.7!(phi2x3)$);
\draw [->] ($(sig15)!0.3!(phi2x4)$) -- ($(sig15)!0.7!(phi2x4)$);
\draw [->] ($(sig16)!0.3!(phi2x5)$) -- ($(sig16)!0.7!(phi2x5)$);

\foreach \i in {1, ..., 5} {
    \node (Gtopx\i) at ($(3.5 + 1.5*\i,0)$)  [vertex] {};
    \node (Gbotx\i) at ($(3.5 + 1.5*\i,-2)$) [vertex] {};
}

\foreach \i in {1, ..., 5} {
    \foreach \j in {1, ..., 5} {
        \draw [color=grey, line width = 1mm] (Gtopx\i) -- (Gbotx\j);
    }
}

\foreach \i in {1, ..., 5} {
	\pgfmathtruncatemacro \ii {\i+1}
    \foreach \j in {\i, ..., 5} {
    	\ifthenelse{\equal{\ii}{\j}} {
    	    \draw [color=grey, line width = 1mm] (Gtopx\i) to (Gtopx\j);
    	    \draw [color=grey, line width = 1mm] (Gbotx\i) to (Gbotx\j);
    	} {    	    	    
            \draw [color=grey, line width = 1mm] (Gtopx\i) to [bend left]  (Gtopx\j);
            \draw [color=grey, line width = 1mm] (Gbotx\i) to [bend right] (Gbotx\j);
        }
    }
}

\foreach \i in {1, ..., 5} {
	\draw [color=blue, very thin] (Gtopx\i) -- (Gbotx\i);
}

\draw [color=red, thick] (Gtopx1) -- (Gtopx2) -- (Gbotx3) -- (Gtopx1);
\draw [color=red, thick] (Gbotx3) -- (Gtopx4) -- (Gbotx5) to [bend left] (Gbotx3);

\node at (Gtopx1) [label = above:\strut $x_1$]            {};
\node at (Gtopx2) [label = above:\strut $x_2$]            {};
\node at (Gtopx3) [label = above:\strut $\overline{x_3}$] {};
\node at (Gtopx4) [label = above:\strut $x_4$]            {};
\node at (Gtopx5) [label = above:\strut $\overline{x_5}$] {};

\node at (Gbotx1) [label={[label distance = 6]270:\strut $\overline{x_1}$}] {};
\node at (Gbotx2) [label={[label distance = 6]270:\strut $\overline{x_2}$}] {};
\node at (Gbotx3) [label={[label distance = 6]270:\strut $x_3$}]            {};
\node at (Gbotx4) [label={[label distance = 6]270:\strut $\overline{x_4}$}] {};
\node at (Gbotx5) [label={[label distance = 6]270:\strut $x_5$}]            {};

\draw [thick, dashed] (4,-1) -- (12.2,-1);
\node at (11.75, -0.5) {$\strut S$};
\node at (11.75, -1.5) {$\strut \overline{S}$};

\node at (0,1.5) [font = \Large] {$\phi$};
\node at ($(Gtopx3) + (0,1.5)$) [font = \Large] {$G$};
\end{tikzpicture}
\end{center}

\caption{An example of the reduction from an instance $\phi$ of \tnaesat to an instance $(G,k)$ of \largemes used in the proof of Lemma~\ref{lem:largeminsep-hard}. The thin blue edges of $G$ are elements of $F_1$, the thick red edges are elements of $F_2$, and the very thick grey edges are elements of $F_3$. In this example we have $k = {\color{blue}{5}} + {\color{red}{4}} + {\color{darkgrey}{5\cdot 5}} = 34$, and a minimal edge separator is maximum if and only if it contains all edges of $F_1$ and 4 edges of $F_2$.} 
\label{fig:largeminsep}
\end{figure}

\begin{proof}
Since \largemes and \largestmes are in \numP, it follows that $\largemes \lAP \sat$ and
$\text{\#}(s,t)\text{-Large}\-\text{MinimalEdgeSeps}\lAP \sat$ by~\cite{dggj-approx}. We will first prove the result for \largemes. Let $\phi$ be an instance of \tnaesat, which is hard by Lemma~\ref{lem:tnaesat-hard}. Let $x_1, \dots, x_n$ be the variables of $\phi$, and let $\C_1, \dots, \C_m$ be the clauses of $\phi$. We shall construct an instance $(G,k)$ of \largemes with the property that $\SAT(\phi) = 2 \cdot \TMES(G)$, from which the result follows immediately. See Figure~\ref{fig:largeminsep} for an example.

We define $G = (V,E)$ as follows. Let 
\[V = \{x_i\mid  i \in [n]\} \cup \{\overline{x_i}\mid i \in [n]\}.\]
Let $C_i \subseteq V$ be the set of variables appearing in clause $\C_i$. We now define sets of edges
\begin{align*}
F_1 &= \{\{x_i, \overline{x_i}\}\mid  i \in [n]\}, \\
F_2 &= \biguplus_{i \in [m]} C_i^{(2)}, \\
F_3 &= V^{(2)}.
\end{align*}
We then define $E = F_1 \uplus F_2 \uplus F_3$. Finally, let $k = n + 2m + n^2$.

Suppose that $F$ is a minimal edge separator of $G$. By Proposition~\ref{prop:max-minl-seps-are-cuts}, $G-F$ has exactly two components $S$ and $V \setminus S$. We claim that $|F| \le k$, with equality if and only if the following properties hold.
\begin{enumerate}
\item For all $i \in [n]$, $|\{x_i, \overline{x_i}\} \cap S| = 1$.
\item For all $i \in [m]$, $|F \cap C_i^{(2)}| = 2$.
\end{enumerate}

First, note that $|F \cap F_1| \le n$ with equality if and only if (i) holds. Second, note that for all $i\in [m]$, we have $|F \cap C_i^{(2)}| \le 2$ with equality for all $i$ if and only if (ii) holds. Finally, note that
\[|F \cap F_3| = |S|(2n-|S|) = n^2 - (n-|S|)^2 \le n^2\]
with equality if and only if $|S| = n$ (which is implied by (i)). Hence
\[|F| = |F \cap F_1| + \sum_{i \in [m]}|F \cap C_i^{(2)}| + |F \cap F_3| \le k,\]
with equality if and only if (i) and (ii) hold. We will soon see that satisfying assignments of $\phi$ correspond to minimal edge separators satisfying (i) and (ii). Since $\phi$ is satisfiable, this will imply in particular that $(G,k)$ is an instance of \largemes.

We now define a two-to-one correspondence between satisfying assignments of $\phi$ and minimal edge separators of $G$ of cardinality $k$. Given a satisfying assignment $\sigma:\{x_1, \dots, x_n\} \rightarrow \{0,1\}$, let $S = \{x_i\mid  \sigma(x_i) = 1\} \cup \{\overline{x_i}\mid  \sigma(x_i) = 0\}$ and let $f(\sigma)$ be the multiset of edges from $S$ to $V \setminus S$. Note that since $G$ contains a spanning clique it is immediate that $f(\sigma)$ is a minimal $(x_1,\overline{x_1})$-edge separator, and hence a minimal edge separator. Moreover, since $\sigma$ is a satisfying assignment, $f(\sigma)$ satisfies (i) and (ii) and therefore has cardinality $k$. It is immediate that $f$ is a two-to-one map, with $f(\sigma) = f(1-\sigma)$. It remains only to prove that $f$ is surjective.

Let $F$ be a minimal edge separator of $G$ of cardinality $k$, let $S$ be a component of $G-F$, and let $\sigma_S:\{x_1, \dots, x_n\}\rightarrow \{0,1\}$ be given by
\[\sigma_S(x_i) = \begin{cases}1 & \textnormal{ if }x_i \in S, \\ 0 & \textnormal{ if }\overline{x_i} \in S.\end{cases}\]
By property (i), $\sigma_S$ is well-defined. Let $\C_i$ be a clause of $\phi$. Then $C_i \cap S$ is the set of literals in $\C_i$ which are true under $\sigma_S$, and so $\sigma_S$ satisfies $\C_i$ by property (ii). Hence $\sigma_S$ is a satisfying assignment of $\phi$, and so $\SAT(\phi) = 2 \cdot \TMES(G)$ as required.

Note that any maximum minimal edge separator in $G$ is a maximum minimal $(x_1,\overline{x_1})$-edge separator and vice versa, and so we also have $\SAT(\phi) = 2 \cdot \TMES(G, x_1, \overline{x_1})$. The result therefore follows for \largestmes as well.
\end{proof}

\subsection{Hardness of minimal separator problems}

The remaining reductions necessary to prove Theorems~\ref{thm:ms-vertex-hard} and~\ref{thm:ms-edge-hard} are all quite similar. For convenience, we combine them into the following two lemmas. The first lemma will be used to prove Theorem~\ref{thm:ms-edge-hard}.

\begin{lem}\label{lem:minsep-core-edge}
Let $G = (V,E)$ be a connected multigraph, writing $n = |V|$ and $m = |E|$. Suppose $(G,x)$ is an instance of \largemes, and $(G,s,t,y)$ is an instance of \largestmes. Let $k = \lceil m+\log_2(m)+10\rceil$. Then there exists a graph $G'$ such that the following properties hold.
\begin{enumerate}
\item $G'$ is bipartite, $V \subseteq V(G')$, and $|V(G')| \le |E|k+|V|$.
\item $\TMES(G) \le \MES(G')/2^{kx} \le \TMES(G) + \frac{1}{4}$.
\item $\TMES(G,s,t) \le \MES(G',s,t)/2^{ky} \le \TMES(G,s,t) + \frac{1}{4}$.
\end{enumerate}
\end{lem}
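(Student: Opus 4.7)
My plan is to construct $G'$ from $G$ by replacing each edge $e=\{u,v\}$ of $G$ with $k$ parallel paths of length~$2$: for each $e \in E$, introduce $k$ fresh vertices $w_{e,1},\dots,w_{e,k}$, and add the edges $\{u,w_{e,j}\}$ and $\{w_{e,j},v\}$ for each $j\in[k]$. This is a combined ``thickening-and-stretching'' in the spirit of the construction in Theorem~\ref{thm:maxl-bis-hard}: the thickening is what lets us amplify each minimal edge separator $F$ of $G$ into $2^{k|F|}$ minimal edge separators of $G'$, while the length-$2$ stretch is what guarantees bipartiteness. Writing $W$ for the set of $k|E|$ subdivision vertices, property~(i) is immediate: $(V, W)$ is a bipartition of the simple graph~$G'$, $V \subseteq V(G')$, and $|V(G')| = |V| + k|E|$.

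The core of the argument is a structural description of minimal edge separators of~$G'$. By Proposition~\ref{prop:max-minl-seps-are-cuts}, each such separator $F'$ corresponds to a partition $V(G') = A' \sqcup B'$ into two non-empty parts with $G'[A']$ and $G'[B']$ both connected. Writing $A = A' \cap V$ and $B = B' \cap V$, I would split into two cases. In the \emph{main case}, both $A$ and $B$ are non-empty. Contracting each subdivision vertex appearing along a walk in $G'[A']$ back to its original edge of $G$ shows that $G[A]$ and $G[B]$ are both connected, so $(A,B)$ is the $2$-partition of some minimal edge separator $F$ of $G$. Conversely, given such $F$, the subdivision vertices of each non-cut edge must lie on the same side as their common endpoint (otherwise one would be isolated on the opposite side, violating connectedness since $|A|,|B|\ge 1$), while the $k$ subdivision vertices of each cut edge can be assigned independently to either side, giving exactly $2^{k|F|}$ lifts of $F$. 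In the \emph{exceptional case}, one of $A,B$ is empty; since vertices in $W$ are pairwise non-adjacent in $G'$, this forces the corresponding part to consist of a single subdivision vertex $w$, contributing one separator (the two edges at $w$) for each of the $k|E|$ subdivision vertices.

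Summing the two cases gives $\MES(G') = \sum_F 2^{k|F|} + k|E|$, where $F$ ranges over minimal edge separators of $G$. For the $(s,t)$-version, the exceptional case cannot contribute, because $s,t \in V$ cannot both lie on a side consisting of a single subdivision vertex; hence $\MES(G',s,t) = \sum_F 2^{k|F|}$, the sum now ranging over minimal $(s,t)$-edge separators of $G$. Separating off the maximum-cardinality contribution and crudely bounding the rest using the fact that there are at most $2^m$ (or minimal $(s,t)$-) edge separators of $G$, each of cardinality at most $x-1$ (resp.\ $y-1$), the error after dividing by $2^{kx}$ (resp.\ $2^{ky}$) is at most $2^{m-k} + km/2^k$ (resp.\ $2^{m-k}$), and a short calculation shows that the choice $k = \lceil m + \log_2(m) + 10\rceil$ makes both quantities comfortably below $1/4$, establishing~(ii) and~(iii).

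The main obstacle is the structural description in the middle paragraph — in particular, pinning down rigorously that the subdivision vertices of non-cut edges are forced to travel with their endpoints, and that the only minimal edge separators of $G'$ missed by the main case are the $k|E|$ ``isolate a single subdivision vertex'' ones. Once this is secured, both the counting identity and the numerical verification behind the choice of~$k$ are routine.
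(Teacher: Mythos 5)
Your proposal is correct and follows essentially the same route as the paper: the identical $k$-thicken-then-$2$-stretch construction, the same counting identity $\MES(G') = \sum_F 2^{k|F|} + km$ with the $km$ ``isolate one subdivision vertex'' separators split off (the paper calls the others \emph{good}), the same observation that these exceptional separators cannot be $(s,t)$-separators, and the same style of error bound. The structural step you flag as the main obstacle is handled in the paper by exactly the isolated-vertex argument you sketch, via Proposition~\ref{prop:max-minl-seps-are-cuts}, so there is no gap to close.
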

\begin{proof}
Informally, we form $G'$ by first $k$-thickening and then 2-stretching each edge of $G$. Formally, we define $G'$ as follows. For each $e \in E$ let $X_e$ be a set of $k$ vertices, disjoint from $V$, where $X_e \cap X_f = \emptyset$ whenever $e \ne f$. Let $X = \bigcup_{e \in E}X_e$. Then we define
\begin{align*}
V(G') & = V \cup X, \\
E(G') & = \bigcup_{e = \{u,v\} \in E} \{\{u,w\},\{w,v\}\mid w \in X_e\}.
\end{align*}
Thus $G'$ satisfies property (i). For each $e = \{u,v\} \in E$, let $P^e_1, \dots, P^e_k$ be the internally vertex-disjoint paths in $G'$ of the form $uwv$ with $w \in X_e$.

We say a minimal edge separator $F'$ of $G'$ is \emph{good} if it is not of the form $E(P^e_i)$ for some $e \in E$, $i \in [k]$. Note that every good minimal edge separator $F'$ of $G'$ satisfies the following properties.
\begin{enumerate}
\item[(a)] $|F' \cap E(P^e_i)| \le 1$ for all $e \in E$, $i \in [k]$. 
\item[(b)] If $|F' \cap E(P^e_i)| = 1$ for some $e \in E$, $i \in [k]$, then $|F' \cap E(P^e_j)| = 1$ for all $j \in [k]$. 
\end{enumerate}
For a good minimal edge separator $F'$ of~$G'$, write 
\[\pi(F') = \{e \in E \mid F' \cap E(P^e_i) \ne \emptyset\textnormal{ for some }i\in [k]\}.\]
We say that a minimal edge separator $F$ of $G$ corresponds to a good minimal edge separator $F'$ of $G'$ when $F = \pi(F')$. By properties (a) and (b), any minimal edge separator $F$ of $G$ corresponds to exactly $2^{k|F|}$ good minimal edge separators of $G'$. Conversely, any good minimal edge separator of $G'$ corresponds to a single minimal edge separator of $G$. Finally, there are exactly $mk$ non-good minimal edge separators of $G'$. Hence, writing $\MES_i(G)$ for the number of minimal edge separators of $G$ with cardinality $i$, we have
\[\MES(G') = \sum_{i=1}^x\MES_i(G)\cdot 2^{ki} + mk.\]

It follows immediately that $\MES(G')/2^{kx} \ge \TMES(G)$. Moreover, we have
\begin{align*}
\MES(G')/2^{kx} &=   \TMES(G) + \sum_{i=1}^{x-1} \MES_i(G)\cdot 2^{k(i-x)} + mk\cdot 2^{-kx} \\
                      &\le \TMES(G) + m\cdot 2^m \cdot 2^{-k} + k^2\cdot 2^{-k} \\
                      &\le \TMES(G) + \frac{1}{8} + \frac{1}{8} = \TMES(G) + \frac{1}{4}.
\end{align*}
(In the penultimate inequality, we use the fact that $G$ is connected and so $x \ge 1$. In the final inequality, we use the fact that $k \ge 10$ and hence $k^2\cdot 2^{-k} \le 1/8$.) Hence $G'$ satisfies property (ii). Moreover, minimal $(s,t)$-edge separators of $G$ correspond only to good minimal $(s,t)$-edge separators of $G'$ and vice versa, and so $G'$ satisfies property (iii) by the same argument.
\end{proof}

We can now prove Theorem~\ref{thm:ms-edge-hard}.

{\renewcommand{\thethm}{\ref{thm:ms-edge-hard}}
\begin{thm}
The problems \bmes and \stbmes are \numP-complete and are equivalent to \sat under AP-reduction.
\end{thm}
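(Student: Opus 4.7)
The proof would chain the two lemmas of this section. The easy direction $\bmes \lAP \sat$ and $\stbmes \lAP \sat$ follows from the completeness of \sat for \numP under AP-reduction~\cite{dggj-approx} once one checks that both problems lie in \numP, which is routine: a non-deterministic machine can guess a subset $F \subseteq E$ together with a pair $s,t \in V$ and verify in polynomial time that $F$ is a minimal $(s,t)$-edge separator.

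For the hard direction, Lemma~\ref{lem:largeminsep-hard} already supplies $\sat \lAP \largemes$ and $\sat \lAP \largestmes$, so by transitivity it suffices to produce AP-reductions $\largemes \lAP \bmes$ and $\largestmes \lAP \stbmes$. Both are built from Lemma~\ref{lem:minsep-core-edge}: given an instance $(G,x)$ of \largemes with accuracy $\varepsilon$, I would construct the bipartite graph $G'$ guaranteed by the lemma, call the \bmes oracle on $G'$ with a suitable accuracy parameter $\delta$ to receive $\hat{N}$, and output $\hat{N}/2^{kx}$. Part~(ii) of the lemma together with the oracle guarantee place this quantity in the interval $[e^{-\delta}\TMES(G),\, e^{\delta}(\TMES(G)+\tfrac{1}{4})]$. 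Crucially, $\TMES(G) \ge 1$ throughout this reduction chain: the \tnaesat-instance fed into Lemma~\ref{lem:largeminsep-hard} is satisfiable and the symmetry of NAE under bitwise complement forces $\SAT(\phi) \ge 2$, whence $\TMES(G) = \SAT(\phi)/2 \ge 1$. Hence setting $\delta$ proportional to $\varepsilon$ yields an $e^\varepsilon$-approximation to $\TMES(G)$ provided the additive $\tfrac{1}{4}$ slack is dominated by $\TMES(G)$; for $\varepsilon$ bounded below by a constant this is immediate, while for small $\varepsilon$ one inspects the proof of Lemma~\ref{lem:minsep-core-edge} and notes that enlarging $k$ to $\lceil m + \log_2 m + 10 + \log_2(\varepsilon^{-1})\rceil$ drives the additive slack below $\varepsilon/4$ while keeping $|V(G')|$ polynomial in $|G|$ and $\log(\varepsilon^{-1})$. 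The same argument, using part~(iii) of the lemma in place of part~(ii), gives $\largestmes \lAP \stbmes$.

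For the exact \numP-hardness of \bmes and \stbmes the same pipeline works in a simpler form: given a $\sat$ instance $\phi$, build $(G,x)$ via Lemma~\ref{lem:largeminsep-hard} and $G'$ via Lemma~\ref{lem:minsep-core-edge} with the original $k$, call the exact oracle on $G'$ to obtain $\MES(G')$, recover $\TMES(G) = \lfloor \MES(G')/2^{kx}\rfloor$ using the $\tfrac{1}{4}$ bound, and output $\SAT(\phi) = 2\TMES(G)$. The main obstacle is the precision-tuning in the AP-reduction: one must verify that the additive error in Lemma~\ref{lem:minsep-core-edge} shrinks exponentially fast in $k$, that enlarging $k$ does not disrupt parts~(i)--(iii), and then carefully translate the additive approximation to $\TMES(G)$ into a multiplicative $e^\varepsilon$-approximation in the style of the tuning argument used for Theorem~\ref{thm:maxl-bis-hard} and Theorem~3 of~\cite{dggj-approx}.
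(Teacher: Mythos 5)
Your proposal is correct and follows essentially the same route as the paper: membership in \numP gives the easy direction, the hard direction chains Lemma~\ref{lem:largeminsep-hard} with Lemma~\ref{lem:minsep-core-edge} to obtain $\largemes \lAP \bmes$ and $\largestmes \lAP \stbmes$, and the integrality of $\TMES(G)$ together with the additive-$\tfrac{1}{4}$ bound yields the exact Turing reductions. The only divergence is in the accuracy-tuning, which the paper defers to the proof of Theorem~3 of~\cite{dggj-approx}; your alternative of enlarging $k$ by $\log_2(\varepsilon^{-1})$ to shrink the additive slack is a valid (and self-contained) way to handle it, since $\TMES(G)\ge 1$ and the construction remains polynomial-size.
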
}
\begin{proof}
Both problems are in \numP, and hence AP-reducible to \sat by~\cite{dggj-approx}. As in the proof of Theorem~\ref{thm:maxl-bis-hard}, Lemma~\ref{lem:minsep-core-edge} implies that 
\begin{align*}
\largemes &\lAP \bmes, \\
\largestmes &\lAP \stbmes.
\end{align*}
Moreover, since $\TMES(G)$ and $\TMES(G,s,t)$ are integers for all $G$, $s$ and $t$, Lemma~\ref{lem:minsep-core-edge} also yields exact Turing reductions. The result therefore follows by Lemma~\ref{lem:largeminsep-hard}.
\end{proof}

The second lemma will be used to prove Theorem~\ref{thm:ms-vertex-hard}.

\begin{lem}\label{lem:minsep-core-vertex}
Let $G = (V,E)$ be a connected multigraph, writing $n = |V|$ and $m = |E|$. Suppose $(G,x)$ is an instance of \largemes, and $(G,s,t,y)$ is an instance of \largestmes. Let $k = \lceil m+n+\log_3(n^2)+16\rceil$. Then there exists a graph $G'$ such that the following properties hold.
\begin{enumerate}
\item $G'$ is bipartite, $V \subseteq V(G')$, and $|V(G')| \le 3|E|k+|V|$.
\item $\TMES(G) \le \MS(G')/3^{kx} \le \TMES(G) + \frac{1}{4}$.
\item $\TMES(G,s,t) \le \MS(G',s,t)/3^{ky} \le \TMES(G,s,t) + \frac{1}{4}$.
\item $\TMES(G) \le \IMS(G')/3^{kx} \le \TMES(G) + \frac{1}{4}$.
\end{enumerate}
\end{lem}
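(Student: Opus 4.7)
The plan is to mirror the construction of Lemma~\ref{lem:minsep-core-edge}, but using $k$ parallel paths of length $4$ instead of $2$. Formally, for each $e = \{u, v\} \in E(G)$ and $i \in [k]$, introduce three fresh interior vertices $a_e^i, b_e^i, c_e^i$ and the path $P_e^i: u - a_e^i - b_e^i - c_e^i - v$. The three interior vertices per path give the $3^k$ factor per edge, and the even path length ensures $G'$ is bipartite with $V \cup \{b_e^i\}$ on one side and $\{a_e^i, c_e^i\}$ on the other regardless of~$G$; thus $|V(G')| = |V| + 3|E|k$, giving property~(i). Write $X_e^i = \{a_e^i, b_e^i, c_e^i\}$.

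Call $X \subseteq V(G') \setminus V$ \emph{good} if, for every $e \in E$, $|X \cap X_e^i|$ is independent of $i$ and lies in $\{0, 1\}$; for such $X$ set $\pi(X) = \{e : X \cap X_e^i \neq \emptyset\}$. The essentialness arguments of Lemma~\ref{lem:minsep-core-edge} carry over: a vertex of $X$ in some $X_e^i$ is essential only if $e$ is essential in $\pi(X)$ as an edge separator of~$G$; inconsistency or duplication within a single $X_e^i$ renders some vertex non-essential. Consequently the good minimal separators of $G'$ are exactly the good $X$ with $\pi(X)$ a minimal edge separator of~$G$, and each minimal edge separator $F$ of~$G$ corresponds to exactly $3^{k|F|}$ good minimal separators of~$G'$. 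So the good contribution to $\MS(G')$ is $\sum_F 3^{k|F|}$, which lies in $[\TMES(G)\cdot 3^{kx},\; \TMES(G)\cdot 3^{kx} + 2^{m+1}\cdot 3^{k(x-1)}]$.

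The main obstacle is bounding the non-good minimal separators, which split into two types: \emph{singular} separators $\{a_e^i, c_e^i\}$ (exactly $mk$ of them — any proper superset is non-minimal because $\{a_e^i, c_e^i\}$ already separates $b_e^i$ from the rest of~$G'$), and \emph{V-involved} separators with $V_X := X \cap V \neq \emptyset$. For the V-involved case, a short case analysis shows that $X$ cannot contain $\{a_e^i, c_e^i\}$ as a subset; that $X \cap X_e^i = \emptyset$ whenever $e$ is incident to $V_X$ (any interior vertex from such a gadget is non-essential, serving only to partition an already-disconnected dangling fragment); and that on edges of $G - V_X$ the intersection $X \cap X_e^i$ is good. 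So every V-involved $X$ is determined by $(V_X, \pi(X))$, a minimal combined vertex-edge $(s,t)$-separator of~$G$, together with one of $3^{k|\pi(X)|}$ interior configurations. A Menger-type argument (the pair forms a minimum vertex-edge cut, and lifting each $v \in V_X$ to an incident edge on an essentialness-witnessing $s$-$t$ path in $G - \pi(X) - (V_X \setminus \{v\})$ yields an edge set containing a minimal edge separator of size $|V_X| + |\pi(X)|$) shows $|V_X| + |\pi(X)| \le x$; in particular $|\pi(X)| \le x - 1$ whenever $V_X \ne \emptyset$. Since there are at most $2^n$ choices for $V_X$ and $2^m$ for $\pi(X)$, the V-involved count is at most $2^{n+m} \cdot 3^{k(x-1)}$.

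Combining the three contributions and dividing by $3^{kx}$,
\[
\frac{\MS(G')}{3^{kx}} \le \TMES(G) + \frac{2^{m+1} + mk + 2^{n+m}}{3^k} \le \TMES(G) + \tfrac{1}{4},
\]
using $3^k \ge 3^{m+n+16}\cdot n^2$ from $k \ge m + n + \log_3(n^2) + 16$. This gives~(ii); property~(iii) follows by the same argument restricted to minimal $(s,t)$-separators of~$G'$, observing that a good minimal $(s,t)$-separator of $G'$ corresponds to a minimal $(s,t)$-edge separator of~$G$. For~(iv), any good minimal separator $X$ of $G'$ with $\pi(X)$ a minimal edge separator of $G$ is inclusion-minimal: any minimal $X' \subsetneq X$ must be good (it cannot be singular, since $X$ contains no $\{a_e^i, c_e^i\}$, nor V-involved, since $X \cap V = \emptyset$), so $\pi(X') \subsetneq \pi(X)$ would be a strict subset of the minimal edge separator $\pi(X)$, a contradiction. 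Hence $\IMS(G') \ge \sum_F 3^{k|F|} \ge \TMES(G)\cdot 3^{kx}$, and the matching upper bound $\IMS(G') \le \MS(G')$ from~(ii) completes~(iv).
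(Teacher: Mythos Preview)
Your construction and overall strategy match the paper's exactly, and your arguments for (iii) and (iv) are essentially correct. However, there is a genuine gap in the analysis of V-involved separators for part~(ii): claim~(b), that $X\cap X_e^i=\emptyset$ whenever $e$ is incident to $V_X$, is false.

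Take $e=\{u,v\}\in E$ and $X=\{u,c_e^i\}$. Then $G'-X$ has $\{a_e^i,b_e^i\}$ as one component, and $X$ is a minimal $(a_e^i,v)$-separator (neither $\{u\}$ nor $\{c_e^i\}$ alone separates $a_e^i$ from $v$). This $X$ is V-involved with $V_X=\{u\}$, yet $c_e^i\in X\cap X_e^i$ with $e$ incident to $u$. Consequently your determination of V-involved $X$ by $(V_X,\pi(X))$ and $3^{k|\pi(X)|}$ interior choices breaks down for such $X$, and your Menger-type argument does not apply either, since the pair $(V_X,\pi(X))=(\{u\},\{e\})$ does not correspond to a hybrid separator between two vertices of $V$.

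The fix is exactly what the paper does first: show that any minimal separator $X$ of $G'$ that is not a minimal $(b,c)$-separator for some $b,c\in V$ must lie entirely inside a single $V(P_e^i)$, and hence there are at most $2^5mk$ of them. Your counterexamples are precisely of this ``trivial'' type. Once you strip those off and restrict to minimal $(b,c)$-separators with $b,c\in V$, your claims (a)--(c) and the Menger-type bound $|V_X|+|\pi(X)|\le x$ go through (this is the content of the paper's Subclaim~3, which constructs a minimal $(b,c)$-edge separator of size at least $|V_X|+|\pi(X)|$). The extra $2^5mk$ term is absorbed by $3^k$ just as your $mk$ term is, so the numerical bound survives; but as written your case split misses these separators.
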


\begin{proof}
Informally, we form $G'$ by first $k$-thickening and then 4-stretching each edge of $G$. Formally, for each $e \in E$, let $X^e$, $Y^e$ and $Z^e$ be sets of $k$ vertices. We require all of these sets to be disjoint from each other and from $V$. For each $e \in E$, write $X^e = \{x^e_1, \dots, x^e_k\}$, $Y^e = \{y^e_1, \dots, y^e_k\}$ and $Z^e = \{z^e_1, \dots, z^e_k\}$. Write $W^e = X^e \cup Y^e \cup Z^e$, and $W = \bigcup_{e \in E} W^e$. Arbitrarily labelling $e$'s endpoints as $u$ and $v$, for each $i \in [k]$ let $P^e_i$ be the path $ux^e_iy^e_iz^e_iv$. Thus the paths $P^e_1,\dots,P^e_k$ are $k$ internally vertex-disjoint paths of length 4 between $e$'s endpoints with $V(P^e_i) = \{u, x^e_i, y^e_i, z^e_i, v\}$. Then we define
\begin{align*}
V(G') & = V \cup W, \\
E(G') & = \bigcup_{\substack{e \in E\\ i \in [k]}} E(P^e_i).
\end{align*}
It is immediate that $G'$ satisfies property (i).

We will be able to associate minimal separators of $G'$ with minimal edge separators of $G$ in much the same way as in the proof of Lemma~\ref{lem:minsep-core-edge}, but the correspondence will be messier since a minimal separator of $G'$ may contain vertices of $V$. Indeed, there may be exponentially many such separators in $k$.

We define our correspondence as follows. If $X$ is a minimal separator of $G'$, we write \[\pi(X) = \{e \in E \mid X \cap W^e \ne \emptyset\}.\]
We say a minimal separator $X$ of $G'$ is \emph{$z$-good}, where $z \in \N$, if it satisfies the following conditions.
\begin{enumerate}
\item[(a)]We have $|X \cap V(P^e_i)| \le 1$ for all $e \in E$, $i \in [k]$.
\item[(b)]Whenever $|X \cap V(P^e_i)| = 1$ for some $e \in E$ and $i \in [k]$, we have $|X \cap V(P^e_j)| = 1$ for all $j \in [k]$.
\item[(c)]We have $X \cap V = \emptyset$.
\item[(d)]We have $|\pi(X)| = z$.
\end{enumerate}
We say that $X$ is \emph{good} if it is $z$-good for some $z \in \N$.

\begin{claim}\label{claim:core-ms-lem}
All but at most $3^{kx}/4$ minimal separators of $G'$ are $x$-good, and all but at most $3^{ky}/4$ minimal $(s,t)$-separators of $G'$ are $y$-good.
\end{claim}

We shall defer the proof of Claim~\ref{claim:core-ms-lem} for the moment. We say that each good minimal separator $X$ of $G'$ corresponds to the multiset $\pi(X) \subseteq E$. Note that any minimal edge separator $F$ of $G$ corresponds to exactly $3^{k|F|}$ good minimal separators of $G'$, all of which are $|F|$-good by the definition of $z$-goodness. Conversely each $z$-good minimal separator of $G'$ corresponds to a single multiset $F \subseteq E$, which is a minimal edge separator of $G$ with cardinality $z$. Hence by Claim~\ref{claim:core-ms-lem},
\[\TMES(G)\cdot 3^{kx}  \le \MS(G') \le \TMES(G)\cdot 3^{kx} + \frac{3^{kx}}{4}.\]
Hence (ii) is satisfied. Moreover, good minimal $(s,t)$-separators of $G'$ correspond to minimal $(s,t)$-edge separators of $G$ and vice versa, so (iii) is likewise satisfied by Claim~\ref{claim:core-ms-lem}. 

Finally, we claim that the following holds.
\begin{equation}\label{eq:two-seps}
\parbox{25em}{Every good minimal separator $X$ of $G'$ separates $G'-X$ into exactly two components.}
\end{equation}
Indeed, $\pi(X)$ is a minimal edge separator of $G$, and so by Proposition~\ref{prop:max-minl-seps-are-cuts} $G-\pi(X)$ has exactly two components. Since $X$ is good, it follows that $G'-X$ has exactly two components also. Hence (\ref{eq:two-seps}) holds. In particular, this implies that every good minimal separator of $G'$ is inclusion-minimal, and so (iv) is satisfied.

It remains only to prove Claim~\ref{claim:core-ms-lem}. We shall first prove that most minimal separators of $G'$ are minimal $(b,c)$-separators for some $b,c \in V$ (see Subclaim~\ref{subclaim:core-restrict-to-bc}). We shall then prove that most such minimal separators $X$ of $G'$ maximise $|\pi(X)|$ (see Subclaim~\ref{subclaim:core-most-project-large}). Finally, we shall prove that if $X$ does maximise $|\pi(X)|$ then $X$ is good (see Subclaim~\ref{subclaim:core-project-large-good}). The first part of Claim~\ref{claim:core-ms-lem} will therefore follow easily. Moreover, Subclaims~\ref{subclaim:core-most-project-large} and~\ref{subclaim:core-project-large-good} will imply the second part of Claim~\ref{claim:core-ms-lem} in a similar fashion.

\begin{subclaim}\label{subclaim:core-restrict-to-bc}
There are at most $2^5mk$ minimal separators in $G'$ which are not minimal $(b,c)$-separators for some $b, c \in V$.
\end{subclaim}

\textit{Proof of Subclaim~\ref{subclaim:core-restrict-to-bc}:} Let $X$ be a minimal $(b,c)$-separator in $G'$ for some $b,c \in V(G')$. We say $X$ is \emph{trivial} if $X \subseteq V(P^e_i)$ for some $e \in E$, $i \in [k]$. We claim that if $X$ is not a $(b',c')$-separator for some $b',c' \in V$ then $X$ is trivial, from which the result follows.

Suppose without loss of generality that $b$ is an internal vertex of $P^e_i$ for some $e \in E$, $i \in [k]$. Suppose that the component of $G'-X$ containing $b$ is a subset of $W^e$. Then $X \cap V(P^e_i)$ is already a $(b,c)$-separator, and so by minimality we have $X \subseteq V(P^e_i)$. Hence $X$ is trivial. We may therefore assume that the component of $G'-X$ containing $b$ also contains some endpoint $b' \in V$ of $e$.

If $c \in V$ then $X$ is a minimal $(b',c)$-separator and we are done. If $c$ is an internal vertex of $V(P^f_j)$ for some $f\in E$, $j \in [k]$, then by the same argument either $X$ is trivial or there exists $c' \in V$ such that $c'$ and $c$ lie in the same component of $G'-X$. Thus $X$ is either trivial or a minimal $(b',c')$-separator, as required. We have therefore proved Subclaim~\ref{subclaim:core-restrict-to-bc}.

\begin{subclaim}\label{subclaim:core-most-project-large}
Let $a \in \N$, and let $b,c \in V$ be distinct. There are at most $2^{m+n}3^{k(a-1)}$ minimal $(b,c)$-separators $X$ of $G'$ with $|\pi(X)| < a$.
\end{subclaim}

\textit{Proof of Subclaim~\ref{subclaim:core-most-project-large}:} We may choose any minimal $(b,c)$-separator $X$ of $G'$ by choosing first $X \cap V$, then $\pi(X)$, then $X \cap W^e$ for each $e \in \pi(X)$. There are at most $2^n$ ways of choosing $X \cap V$ and at most $2^m$ ways of choosing $\pi(X)$. For each $e \in \pi(X)$, since $b,c \in V$, $X$ must contain exactly one vertex internal to each $P^e_i$ and so there are exactly $3^k$ ways of choosing $X \cap W^e$. Since $|\pi(X)| \le a-1$, Subclaim~\ref{subclaim:core-most-project-large} follows.
 
\begin{subclaim}\label{subclaim:core-project-large-good}
Let $b,c \in V$ be distinct, and let $z$ be the maximum cardinality of any minimal $(b,c)$-edge separator of $G$. If $X$ is a minimal $(b,c)$-separator of $G'$ with $|\pi(X)| \ge z$, then $X$ is $z$-good.
\end{subclaim}

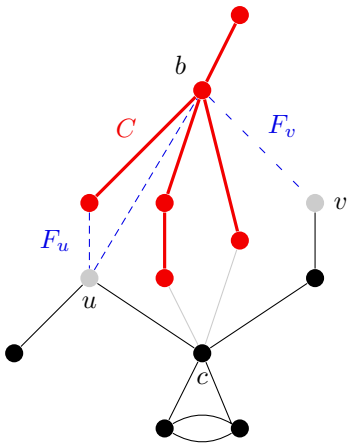
\begin{figure}
\begin{center}
\begin{tikzpicture}

\node (G11) at (2, 0)   [vertex] {};
\node (G12) at (3, 0)   [vertex] {};
\node (G21) at (0, 1)   [vertex] {};
\node (G22) at (2.5, 1) [vertex, label=below:$c$] {};
\node (G31) at (1, 2)   [vertex, color=medgrey, label=below:$u$] {};
\node (G32) at (2, 2)   [vertex, color=darkred] {};
\node (G33) at (4, 2)   [vertex] {};
\node (G41) at (3, 2.5) [vertex, color=darkred] {};
\node (G51) at (1, 3)   [vertex, color=darkred] {};
\node (G52) at (2, 3)   [vertex, color=darkred] {};
\node (G53) at (4, 3)   [vertex, color=medgrey, label=right:$v$] {};
\node (G61) at (2.5, 4.5) [vertex, color=darkred, label=above left:$b$] {};
\node (G71) at (3, 5.5)   [vertex, color=darkred] {};

\draw (G21) -- (G31) -- (G22) -- (G33) -- (G53);
\draw (G22) -- (G11) to [bend left] (G12) -- (G22);
\draw (G11) to [bend right] (G12);
\draw [color=darkred, very thick] (G71) -- (G61) -- (G51);
\draw [color=darkred, very thick] (G61) -- (G41);
\draw [color=darkred, very thick] (G61) -- (G52) -- (G32);
\draw [color=medgrey] (G41) -- (G22) -- (G32);
\draw [color=blue, densely dashed] (G51) -- (G31) -- (G61);
\draw [color=blue, loosely dashed] (G61) -- (G53);

\node at ($(G31)!0.5!(G51)$) [label={[blue,label distance=0]180:$F_u$}] {};
\node at ($(G61)!0.5!(G53)$) [label={[blue,label distance=-5]45: $F_v$}] {};
\node at ($(G51)!0.5!(G61)$) [label={[darkred, label distance=-5]135:$C$}] {};

\end{tikzpicture}
\end{center}

\caption{An example of the minimal $(b,c)$-edge separator $F$ of $G$ formed in the proof of Subclaim~\ref{subclaim:core-project-large-good}. The grey vertices and edges are elements of the hybrid minimal $(b,c)$-separator of $G$ corresponding to $X$. Thus $F$ consists of the grey edges of $G$ together with the edges in $F_u$ and $F_v$.}
\label{fig:vertex-sep}
\end{figure}

\textit{Proof of Subclaim~\ref{subclaim:core-project-large-good}:} Note that since $b,c \in V$, if $X \cap W^e \ne \emptyset$ for some $e \in E$ then $|X \cap \{x^e_i, y^e_i, z^e_i\}| = 1$ for all $i \in [k]$. In particular, if $X$ satisfies (c) then $X$ satisfies (a) and (b). To prove that $X$ satisfies (c) and (d), we shall exhibit a minimal $(b,c)$-edge separator $F$ of $G$ with cardinality at least $|\pi(X)| + |X \cap V|$. Thus (c) and (d) will follow from the definition of $z$ and the fact that $|\pi(X)| \ge z$. See Figure~\ref{fig:vertex-sep} for an example.

We say a pair $(Y, D)$ with $Y \subseteq V$ and $D \subseteq E$ is a \emph{hybrid minimal $(b,c)$-separator} of $G$ if it satisfies the following properties.
\begin{enumerate}
\item[(P1)] $b$ and $c$ lie in separate components of $(G - D) - Y$.
\item[(P2)] For all $Y' \subset Y$, $b$ and $c$ lie in the same component of $(G-D)-Y'$.
\item[(P3)] For all $D' \subset D$, $b$ and $c$ lie in the same component of $(G-D')-Y$.
\end{enumerate}
Thus $(X \cap V, \pi(X))$ is a hybrid minimal $(b,c)$-separator of $G$, since $X$ is a minimal $(b,c)$-separator of $G'$. 

Let $C$ be the component of $(G-\pi(X)) - (X \cap V)$ containing $b$. For each $v \in X \cap V$, let $F_v\subseteq E$ be the multiset of edges between $v$ and $C$ in $G$. Let $F = \pi(X) \cup \bigcup_{v \in X \cap V} F_v$. We claim that $F$ is the required minimal $(b,c)$-edge separator of $G$.

Note that $F_u \cap F_v = \emptyset$ for all distinct $u,v \in X \cap V$. For all $v \in  X \cap V$, we must have $F_v \ne \emptyset$ or (P2) would be violated on taking $Y' = (X \cap V) \setminus \{v\}$. Moreover, we must have $F_v \cap \pi(X) = \emptyset$ or (P3) would be violated on taking $D' = \pi(X) \setminus F_v$. Hence $|F| \ge |\pi(X)| + |X \cap V|$ as required. 

It is immediate from (P1) that $F$ is a $(b,c)$-edge separator of $G$. Finally, note that $F$ is minimal --- (P2) implies that no edge in any $F_v$ can be removed from $F$, and (P3) implies that no edge in $\pi(X)$ can be removed from $F$. Thus $F$ is a minimal $(b,c)$-edge separator of cardinality at least $|\pi(X)| + |X \cap V|$, and so Subclaim~\ref{subclaim:core-project-large-good} follows.

We now prove the first part of Claim~\ref{claim:core-ms-lem}. By Subclaim~\ref{subclaim:core-restrict-to-bc}, all but at most $2^5mk$ minimal separators of $G'$ are minimal $(b,c)$-separators for some $b,c \in V$. Moreover, by Subclaim~\ref{subclaim:core-most-project-large}, there are at most $n^2 \cdot 2^{m+n} 3^{k(x-1)}$ such separators $X$ with $|\pi(X)| < x$. Finally, by Subclaim~\ref{subclaim:core-project-large-good} and the definition of $x$, if $|\pi(X)| \ge x$ then $X$ is $x$-good. It follows that all but at most $2^5mk + n^2 2^{m+n} 3^{k(x-1)}$ minimal separators of $G'$ are $x$-good. We have
\[2^5mk \le 2^5k^2 \le 2^{k+5} \le 3^{\frac{2}{3}(k+5)} \le 3^{k-2} \le \frac{3^{kx}}{8}\]
and
\[n^2 2^{m+n} 3^{k(x-1)} \le 3^{k-16} 3^{k(x-1)} \le \frac{3^{kx}}{8},\]
so all but at most $3^{kx}/4$ minimal separators of $G'$ are $x$-good as required.

The second part of Claim~\ref{claim:core-ms-lem} follows more easily. By Subclaim~\ref{subclaim:core-most-project-large}, all but at most $2^{m+n} 3^{k(y-1)}$ minimal $(s,t)$-separators $X$ of $G'$ satisfy $|\pi(X)| \ge y$. It therefore follows from Subclaim~\ref{subclaim:core-project-large-good} that all but at most 
\[2^{m+n} 3^{k(y-1)} \le \frac{3^{ky}}{4}\]
minimal $(s,t)$-separators of $G'$ are $y$-good as required. Thus Claim~\ref{claim:core-ms-lem} follows, as does the result.
\end{proof}

We can now prove Theorem~\ref{thm:ms-vertex-hard}.

{\renewcommand{\thethm}{\ref{thm:ms-vertex-hard}}
\begin{thm}
The problems 
\stbms, \bms, \stbmstwo, \bmstwo\
and \bims are \numP-complete and are equivalent to \sat under AP-reduction.
\end{thm}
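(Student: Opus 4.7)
The plan is to mirror the proof of Theorem~\ref{thm:ms-edge-hard}, using Lemma~\ref{lem:minsep-core-vertex} in place of Lemma~\ref{lem:minsep-core-edge}. Since all five target problems are in \numP, AP-reducibility to \sat is immediate from~\cite{dggj-approx}; the work lies entirely in the reverse directions and in extracting exact Turing reductions.

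Properties (ii), (iii), and (iv) of Lemma~\ref{lem:minsep-core-vertex} deliver the AP-reductions $\largemes \lAP \bms$, $\largestmes \lAP \stbms$, and $\largemes \lAP \bims$ immediately, by the standard oracle-accuracy-tuning argument already used in the proofs of Theorems~\ref{thm:maxl-bis-hard} and~\ref{thm:ms-edge-hard}. Composing with Lemma~\ref{lem:largeminsep-hard} then gives \sat-hardness for these three problems.

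The only new observation required is to handle \bmstwo and \stbmstwo. Here I would appeal to equation~(\ref{eq:two-seps}) from the proof of Lemma~\ref{lem:minsep-core-vertex}, which establishes that every good minimal separator of $G'$ induces exactly two components. Consequently the count of 2-component minimal separators of $G'$ differs from $\MS(G')$ by at most the number of non-good minimal separators, and the latter is bounded by $3^{kx}/4$ via Claim~\ref{claim:core-ms-lem}. Thus the inequalities in properties (ii) and (iii) of Lemma~\ref{lem:minsep-core-vertex} hold verbatim with $\MS$ and $\MS(\cdot,s,t)$ replaced by their 2-component counterparts, yielding $\largemes \lAP \bmstwo$ and $\largestmes \lAP \stbmstwo$.

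For the exact statement, since $\TMES(G)$ and $\TMES(G,s,t)$ are integers and the relevant ratios lie within $1/4$ of them, a single rounding converts each AP-reduction into an exact Turing reduction. Combined with the \numP-completeness of \largemes and \largestmes from Lemma~\ref{lem:largeminsep-hard}, this yields \numP-completeness of all five problems. I do not expect any real obstacle beyond careful bookkeeping: the genuinely hard technical work --- the construction of $G'$, the hybrid-separator argument in Subclaim~\ref{subclaim:core-project-large-good}, and the bounding of non-good separators --- has already been carried out inside Lemma~\ref{lem:minsep-core-vertex}, so the theorem will follow by a short assembly of the pieces.
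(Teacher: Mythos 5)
Your proposal is correct and follows essentially the same route as the paper: membership in \numP{} gives one direction, Lemma~\ref{lem:minsep-core-vertex}(ii)--(iv) give the AP-reductions and (by integrality and rounding) the exact Turing reductions for \bms, \stbms{} and \bims, and the observation~(\ref{eq:two-seps}) that good minimal separators yield exactly two components handles \bmstwo{} and \stbmstwo. Your sandwiching of the $2$-component count between the number of good separators and $\MS(G')$ is exactly the justification the paper leaves implicit when it says the analogues of (ii)--(iv) ``follow instantly.''
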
}
\begin{proof}
All five problems are in \numP, and hence AP-reducible to \sat by~\cite{dggj-approx}. As in the proof of Theorem~\ref{thm:maxl-bis-hard}, Lemma~\ref{lem:minsep-core-vertex} implies that 
\begin{align*}
\largemes &\lAP \bms, \\
\largemes &\lAP \bims, \\
\largestmes &\lAP \stbmes.
\end{align*}
Moreover, since $\TMES(G)$ and $\TMES(G,s,t)$ are integers for all $G$, $s$ and $t$, Lemma~\ref{lem:minsep-core-vertex} also yields exact Turing reductions. Finally, note that in the proof of Lemma~\ref{lem:minsep-core-vertex}, all good minimal separators of $G'$ separate $G'$ into two components (see (\ref{eq:two-seps})). Analogues of Lemma~\ref{lem:minsep-core-vertex}(ii)--(iv) for \stbmstwo and \bmstwo therefore follow instantly. The result now follows by Lemma~\ref{lem:largeminsep-hard}.
\end{proof}

\section{Problems related to \maxlbis}
\label{sec:related}

\subsection{Hardness of \bds}

Recall that \maxlbis can   be viewed as counting the number of independent dominating sets in a bipartite graph --- a combination of \bis and \bds. We shall now prove that \bds is \sat-hard. We shall reduce from the following problem, which is well-known to be \sat-hard in the guise of \is (see Theorem~\ref{thm:is-hard}).

\begin{defn}
Let $G = (V,E)$ be a graph. A set $S \subseteq V$ is a \emph{vertex cover} of $G$ if $e \cap S \ne \emptyset$ for all $e \in E$. We write $\VC(G)$ for the number of vertex covers of $G$.
\end{defn}

\begin{prob}\label{prob:vc}
\myprob{\vc}{A graph $G$}{The number of vertex covers of $G$, which we denote by $\VC(G)$}	
\end{prob}

We can now prove Theorem~\ref{thm:bds-hard}.

{\renewcommand{\thethm}{\ref{thm:bds-hard}}
\begin{thm}
$\bds \eAP \sat$.
\end{thm}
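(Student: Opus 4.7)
The reverse direction $\bds \lAP \sat$ is immediate from $\bds \in \numP$ and~\cite{dggj-approx}. For the other direction, the bijection $S \leftrightarrow V\setminus S$ between vertex covers and independent sets of a graph gives $\vc \eAP \is$, so by Theorem~\ref{thm:is-hard} it suffices to show $\vc \lAP \bds$.

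Given an instance $G=(V,E)$ of $\vc$ with $|V|=n$, $|E|=m$ and no isolated vertex (WLOG), the plan is to build a bipartite graph $G'$ as follows. Pick a parameter $t=t(n,\varepsilon)$ to be fixed later. Take the parts of $G'$ to be $V$ and a fresh set $W=\{w_e^i : e\in E,\ i\in[t]\}$, and for every $e=\{u,v\}\in E$ and $i\in[t]$ include the edges $\{u,w_e^i\}$ and $\{v,w_e^i\}$; thus each edge of $G$ is replaced by $t$ internally-disjoint length-$2$ paths through $W$. Deliberately, no bristles are attached to the vertices of $V$: doing so would introduce a weight of $2^{|S|}$ in the count below and skew the sum away from $\VC(G)$.

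To count dominating sets of $G'$, classify each dominating set $D$ by $S := D\cap V$. For the domination of $W$, each $w_e^i$ with $e\cap S=\emptyset$ is forced into $D$ and each $w_e^i$ with $e\cap S\ne\emptyset$ is free. Writing $c(S)=|\{e\in E:e\cap S\ne\emptyset\}|$, this contributes a base factor $2^{tc(S)}$. The remaining constraint is domination of each $v\in V\setminus S$, which is automatic unless every neighbour of $v$ lies in $S$, so set $B(S)=\{v\in V\setminus S : N(v)\subseteq S\}$: for each $v\in B(S)$ the assignment placing every $w_e^i$ with $v\in e$ outside $D$ must be forbidden. The key structural observation is that $B(S)$ is an independent set of $G$ (if $v,v'\in B(S)$ were adjacent then $v'\in N(v)\subseteq S$, contradicting $v'\notin S$), so the edge-families at distinct vertices of $B(S)$ are disjoint and the corrections decouple. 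Letting $\DS(G')$ denote the number of dominating sets of $G'$, this yields
\[
\DS(G') \;=\; \sum_{S\subseteq V} 2^{tc(S)} \prod_{v\in B(S)}\bigl(1-2^{-t\deg(v)}\bigr).
\]
Since $\deg(v)\ge 1$, each vertex cover $S$ contributes between $2^{tm}(1-n2^{-t})$ and $2^{tm}$, while each of the at most $2^n$ non-covers satisfies $c(S)\le m-1$ and so contributes at most $2^{t(m-1)}$; hence
\[
\VC(G)(1-n2^{-t}) \;\le\; \DS(G')/2^{tm} \;\le\; \VC(G) + 2^{n-t}.
\]

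The main remaining obstacle is choosing $t$ so that the above constitutes an AP-reduction rather than merely a Turing reduction. Given an accuracy parameter $\varepsilon\in(0,1)$, setting $t = n + \lceil \log_2(4n/\varepsilon) \rceil$ forces both one-sided errors below $\varepsilon/2$ relative to $\VC(G)\ge 1$, so a single call to the $\bds$ oracle on $G'$ at tolerance $\varepsilon/2$ yields an $e^\varepsilon$-approximation to $\VC(G)$. Because $t$ is polynomial in $n$ and $\log(1/\varepsilon)$, so is $|V(G')|$, and the wrapping follows the same template as in the proof of Theorem~\ref{thm:maxl-bis-hard}.
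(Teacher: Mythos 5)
Your reduction is correct, and while the skeleton (reduce from \vc, thicken each edge of $G$ into $t$ internally-disjoint length-2 paths, stratify dominating sets of $G'$ by $S = D \cap V$) matches the paper's, you handle the one genuinely delicate point --- domination of the vertices in $V \setminus S$ --- in a different way. The paper neutralises that constraint with a gadget: an apex vertex $s$ adjacent to all of $V$, itself carrying $t$ pendant vertices so that dominating sets omitting $s$ are exponentially rare; conditioned on $s \in D$, every vertex of $V$ is automatically dominated and the count per vertex cover is exactly $2^{(m+1)t}$. You instead keep the bare construction and observe that the only vertices whose domination is not already forced are those in $B(S) = \{v \in V \setminus S : N(v) \subseteq S\}$, that $B(S)$ is an independent set, and hence that the at-least-one constraints decouple into the exact product $\prod_{v \in B(S)}(1 - 2^{-t\deg(v)})$. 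This buys a cleaner graph $G'$ (no apex, strictly a subdivided thickening) and an exact closed form for $\DS(G')$, at the cost of a two-sided error analysis and an $\varepsilon$-dependent choice of $t$; the paper's gadget buys a one-sided, $\varepsilon$-independent bound ($t = \lceil n + \log_2(m+1) + 3\rceil$ gives additive error at most $\tfrac14$, which also yields the exact Turing reduction for free). Both are valid AP-reductions; note only that your ``WLOG no isolated vertex'' step is genuinely needed (an isolated $v \notin S$ makes its factor $1 - 2^{-t\cdot 0} = 0$), and that a fixed $t = 2n+2$, say, would also have sufficed and matched the paper's additive-$\tfrac14$ template without tying $t$ to $\varepsilon$.
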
}

\begin{figure}[t]
\begin{center}
\begin{tikzpicture}[scale=0.8]

\node (Ga) at (2, 3.464) [vertex, label = above:$1$]       {};
\node (Gb) at (4,0)      [vertex, label = below right:$2$] {};
\node (Gc) at (0,0)      [vertex, label = below left:$3$]  {};
\draw (Ga) -- (Gb) -- (Gc) -- (Ga);

\node (G'a) at (10, 3.464) [vertex, label = above right:$1$] {};
\node (G'b) at (12,0)      [vertex, label = below right:$2$] {};
\node (G'c) at (8,0)       [vertex, label = below left:$3$]  {};
\node (G's) at (10, 4.5)   [vertex, label = below left:$s$, color=blue] {};

\def \x {8mm}
\def \y {2.5mm}
\node (G'ab1) at ($(G'a)!0.5! (G'b)$) [smallvertex] {};
\node (G'ab2) at ($(G'a)!0.5! (G'b)!\x!90:(G'b)$) [smallvertex] {};
\node (G'ab3) at ($(G'a)!0.5! (G'b)!\x+\y!90:(G'b)$) [smallvertex] {};

\node (G'bc1) at ($(G'b)!0.5! (G'c)$) [smallvertex] {};
\node (G'bc2) at ($(G'b)!0.5! (G'c)!\x!90:(G'c)$) [smallvertex] {};
\node (G'bc3) at ($(G'b)!0.5! (G'c)!\x+\y!90:(G'c)$) [smallvertex] {};

\node (G'ca1) at ($(G'c)!0.5! (G'a)$) [smallvertex] {};
\node (G'ca2) at ($(G'c)!0.5! (G'a)!\x!270:(G'c)$) [smallvertex] {};
\node (G'ca3) at ($(G'c)!0.5! (G'a)!\x+\y!270:(G'c)$) [smallvertex] {};

\node (G'y1)   at (9,    6) [smallvertex, color=blue] {};
\node (G'y2)   at (9.5,  6) [smallvertex, color=blue] {};
\node (G'yk-1) at (10.5, 6) [smallvertex, color=blue] {};
\node (G'yk)   at (11,   6) [smallvertex, color=blue] {};

\def \ew {1mm}
\node (G'abdot2) at ($(G'ab1)!0.5!(G'ab2)$)   [ellipses] {};
\node (G'abdot1) at ($(G'abdot2)!\ew!(G'ab1)$) [ellipses] {};
\node (G'abdot3) at ($(G'abdot2)!\ew!(G'ab2)$) [ellipses] {};

\node (G'bcdot2) at ($(G'bc1)!0.5!(G'bc2)$)   [ellipses] {};
\node (G'bcdot1) at ($(G'bcdot2)!\ew!(G'bc1)$) [ellipses] {};
\node (G'bcdot3) at ($(G'bcdot2)!\ew!(G'bc2)$) [ellipses] {};

\node (G'cadot2) at ($(G'ca1)!0.5!(G'ca2)$)   [ellipses] {};
\node (G'cadot1) at ($(G'cadot2)!\ew!(G'ca1)$) [ellipses] {};
\node (G'cadot3) at ($(G'cadot2)!\ew!(G'ca2)$) [ellipses] {};

\node (G'ydot2) at ($(G'y1)!0.5!(G'yk)$)   [ellipses] {};
\node (G'ydot1) at ($(G'ydot2)!\ew!(G'y1)$) [ellipses] {};
\node (G'ydot3) at ($(G'ydot2)!\ew!(G'yk)$) [ellipses] {};

\draw (G'a) -- (G'ab1) -- (G'b);
\draw (G'a) -- (G'ab2) -- (G'b);
\draw (G'a) -- (G'ab3) -- (G'b);

\draw (G'b) -- (G'bc1) -- (G'c);
\draw (G'b) -- (G'bc2) -- (G'c);
\draw (G'b) -- (G'bc3) -- (G'c);

\draw (G'c) -- (G'ca1) -- (G'a);
\draw (G'c) -- (G'ca2) -- (G'a);
\draw (G'c) -- (G'ca3) -- (G'a);

\draw [color=blue, dashed] (G'a) -- (G's);
\draw [color=blue, dashed] (G'b) to [out=45,  in=0]   (G's);
\draw [color=blue, dashed] (G'c) to [out=135, in=180] (G's);

\draw [color=blue, dashed] (G's) -- (G'y1);
\draw [color=blue, dashed] (G's) -- (G'y2);
\draw [color=blue, dashed] (G's) -- (G'yk-1);
\draw [color=blue, dashed] (G's) -- (G'yk);

\node at (Gb)  [occupied] {};
\node at (Gc)  [occupied] {};
\node at (G'b) [occupied] {};
\node at (G'c) [occupied] {};
\node at (G's) [occupied] {};

\node at (G'ab3) [smalloccupied] {};
\node at (G'bc1) [smalloccupied] {};
\node at (G'ca1) [smalloccupied] {};
\node at (G'ca2) [smalloccupied] {};
\node at (G'y1)  [smalloccupied] {};
\node at (G'yk-1)[smalloccupied] {};
\node at (G'yk)  [smalloccupied] {};

\node (Glabel)  at (2, 7)  [font=\Large] {$G$};
\node (G'label) at (10, 7) [font=\Large] {$G'$};

\end{tikzpicture}
\end{center}

\caption{An example of the reduction from an instance $G$ of \vc to an instance $G'$ of \bds used in the proof of Theorem~\ref{thm:bds-hard}. The boxes around vertices indicate a vertex cover in $G$ and a corresponding dominating set in $G'$. Note in particular how the presence of $s$ ensures that vertices 1, 2 and 3 are dominated in $G'$.}
\label{fig:bds}
\end{figure}
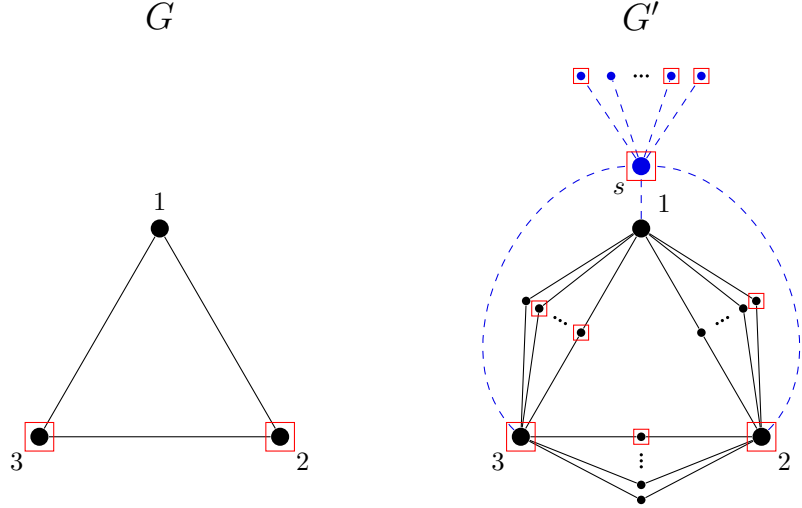

\begin{proof}
For every instance $G'$ of \bds, let $\DS(G')$ be the number of dominating sets in $G'$. Since \bds is in \numP, $\bds \lAP \sat$ follows from~\cite{dggj-approx}. We will show $\vc \lAP \bds$. Let $G = (V,E)$ be an instance of \vc. Without loss of generality, let $V = [n]$ for some $n \in \N$, let $m = |E|$, and let $t = \lceil n+\log_2(m+1)+3\rceil$. We shall construct an instance $G'$ of \bds with the property that $\VC(G) \le \DS(G')/2^{(m+1)t} \le \VC(G)+\frac{1}{4}$, which will be sufficient for the reduction as in the proof of Theorem~\ref{thm:maxl-bis-hard}. See Figure~\ref{fig:bds} for an example.

Informally, we obtain a bipartite graph $G'$ (an instance of \bds) from $G$ by first thickening and then 2-stretching each edge, then adding a gadget to $G$'s vertices which will allow us to ignore their domination constraints. Formally, we define $G'$ as follows. For each $e \in E$ let $X_e$ be a set of $t$ vertices, disjoint from $[n]$, where $X_e \cap X_f = \emptyset$ whenever $e \ne f$. Let $W = \bigcup_{e \in E}X_e$. Let $Y$ be a set of $t$ vertices disjoint from $[n] \cup W$, and let $s$ be a vertex not contained in $[n] \cup W \cup Y$. Then we define
\begin{align*}
V(G') & = Y \cup \{s\} \cup [n] \cup W, \\
E(G') & = \{\{y,s\}\mid  y \in Y\} \cup
		   \{\{s,i\}\mid  i \in [n]\} \cup 
		   \bigcup_{e = \{i,j\} \in E} \{\{i,x\},\{x,j\}\mid x \in X_e\}.
\end{align*}

We say a dominating set $S \subseteq V(G')$ is \emph{good} if the following conditions hold.
\begin{enumerate}
\item $s \in S$.
\item For all $e \in E$, we have $e \cap S \ne \emptyset$.
\end{enumerate}
We will show that good dominating sets in $G'$ correspond to vertex covers in $G$, and that almost all dominating sets in $G'$ are good.

First note that there are exactly $2^{(m+1)t}$ ways of extending any vertex cover $X$ of $G$ into a good dominating set of $G'$. Indeed, a set $S$ satisfying $X \cap [n] = S$ is a good dominating set of $G'$ if and only if $s \in S$. Hence there are $2^{(m+1)t}\VC(G)$ good dominating sets of $G'$, and in particular $\DS(G')/2^{(m+1)t} \ge \VC(G)$.

Moreover, suppose that $S$ is a dominating set of $G'$ which is not good. Then either $s \notin S$ or $e \cap S = \emptyset$ for some $e \in E$. If $s \notin S$, then $Y \subseteq S$. If $e \cap S = \emptyset$ for some $e \in E$, then $X_e \subseteq S$. Note that $n + \log_2(m+1) + 2 \le t$, so $2^{mt+n+\log_2(m+1)} \le 2^{(m+1)t-2}$. Hence there are at most
\[(m+1)2^{|V(G')| - t} = (m+1)2^{mt + n + 1} \le \frac{2^{(m+1)t}}{4}\]
dominating sets of $G'$ which are not good. In particular, we have
\[\frac{\DS(G')}{2^{(m+1)t}} \le \VC(G) + \frac{1}{4}.\]
The result therefore follows.
\end{proof}

\subsection{Hardness of \setunion}

We shall now prove that \setunion is \sat-hard by a reduction from \maxlbis. 

{\renewcommand{\thethm}{\ref{thm:su-hard}}
\begin{thm}
$\setunion \eAP \sat$.
\end{thm}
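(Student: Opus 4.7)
The direction $\setunion \lAP \sat$ is automatic from membership of \setunion in \numP by~\cite{dggj-approx}, so the substantive work is $\sat \lAP \setunion$. Combined with Theorem~\ref{thm:maxl-bis-hard}, it suffices to give an AP-reduction from \maxlbis to \setunion, and in fact my plan is to give a parsimonious one.

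Given a bipartite graph $G = (A \cup B, E)$ (an instance of \maxlbis), the reduction will produce the family $\F = \{N_G(a) \mid a \in A\}$, viewed as a subfamily of $2^B$ after an arbitrary identification $B \cong [|B|]$. The central claim is that the map $\Phi : I \mapsto N_G(I \cap A)$ is a bijection between the maximal independent sets of $G$ and $\U(\F)$. I would verify this bijection in three short steps. First, forward well-definedness: $\Phi(I) = \bigcup_{a \in I \cap A} N_G(a)$ is manifestly a member of $\U(\F)$. Second, injectivity: in a bipartite graph a maximal independent set $I = I_A \cup I_B$ is rigidly determined by $N_G(I_A)$, since maximality forces $I_B = B \setminus N_G(I_A)$ (otherwise some $b \in B$ with no neighbour in $I_A$ could be added to $I$) and then forces $I_A = \{a \in A \mid N_G(a) \subseteq N_G(I_A)\}$ (otherwise such an $a$ has no neighbour in $I_B$ and could be added). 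Third, surjectivity: for any $S \in \U(\F)$, the set $\{a \in A \mid N_G(a) \subseteq S\} \cup (B \setminus S)$ is easily checked to be a maximal independent set whose $\Phi$-image is $S$.

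The main obstacle will be to confirm that the correspondence is truly parsimonious in the face of a few degenerate cases. Distinct vertices $a, a' \in A$ with $N_G(a) = N_G(a')$ collapse to a single element of $\F$, but the characterization above forces them to lie in $I \cap A$ together, so no count is lost; isolated vertices of $B$ lie in every maximal independent set and are invisible to $\F$, so they must be removed in a trivial preprocessing step; and one has to check the \setunion\ convention regarding whether the empty union $\cup\emptyset = \emptyset$ is counted in $\U(\F)$, since the bijection naturally pairs $I_A = \emptyset$ with $\emptyset \in \U(\F)$. Any residual off-by-one is absorbed by a polynomial-time additive correction, producing an exact and therefore approximation-preserving reduction.
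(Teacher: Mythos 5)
Your proposal is correct and is essentially the paper's own proof: the paper also reduces from \maxlbis\ via the family of neighbourhoods of one side of the bipartition (it uses $\F = \{N(v) \mid v \in B\}$ and the bijection $X \mapsto A \setminus (X \cap A) = N(X\cap B)$, which is your construction with the roles of $A$ and $B$ swapped), establishing the same exact correspondence $\MIS(G) = \SU{\F}$. The degenerate cases you flag are in fact harmless --- repeated neighbourhoods and isolated vertices do not disturb the bijection, and the empty-union convention affects the count by at most an additive constant, exactly as you note.
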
}

\begin{figure}[t]
\begin{center}
\begin{tikzpicture}
\node (G1) at (0,3) [vertex, label = left:1] {};
\node (G2) at (0,2) [vertex, label = left:2] {};
\node (G3) at (0,1) [vertex, label = left:3] {};
\node (G4) at (0,0) [vertex, label = left:4] {};

\node (Ga) at (2,3)   [vertex, label = right:$a$] {};
\node (Gb) at (2,1.5) [vertex, label = right:$b$] {};
\node (Gc) at (2,0)   [vertex, label = right:$c$] {};

\draw (G2) -- (Ga) -- (G1) -- (Gb) -- (G3);
\draw (Gc) -- (G4) -- (Gb);

\draw (-0.2,1.5) ellipse (0.75 and 2.1);
\draw (2.2, 1.5) ellipse (0.75 and 2.1);

\node at (G2) [occupied] {};
\node at (Gb) [occupied] {};
\node at (Gc) [occupied] {};

\matrix[matrix of nodes, inner sep=0pt, column sep=0pt] at (6, 1.5) {
      \node (Fstart) {$\{$}; & 
      \node (Fa)   {$\{1,2\}$}; & 
      \node (Fab)  {$,\,$}; & 
      \node (Fb)   [draw, inner xsep = 1, inner ysep = 2, color=red, 
                         text=black] {$\{1,3,4\}$}; &
      \node (Fbc)  {$\,,$}; &
      \node (Fc)   [draw, inner xsep = 1, inner ysep = 2, color=red, 
                         text=black] {$\{4\}$}; &
      \node (Fend) {$\}$}; \\
};

\node at (Fa) [label={[label distance = 2]90:$a$}] {};
\node at (Fb) [label={[label distance = 2]90:$b$}] {};
\node at (Fc) [label={[label distance = 2]90:$c$}] {};

\node at (-0.2, -1) {$A$};
\node at (2.2, -1)  {$B$};
\node at (1, 4)  [font=\Large] {$G$};
\node at (6, 4)  [font=\Large] {$\F$};
\end{tikzpicture}
\end{center}

\caption{An example of the reduction from an instance $G$ of \maxlbis to an instance $(n,\F)$ of \setunion used in the proof of Theorem~\ref{thm:su-hard}. The boxes around vertices indicate a maximal independent set in $G$ and the corresponding union of sets in $\F$.}
\label{fig:setunion}
\end{figure}
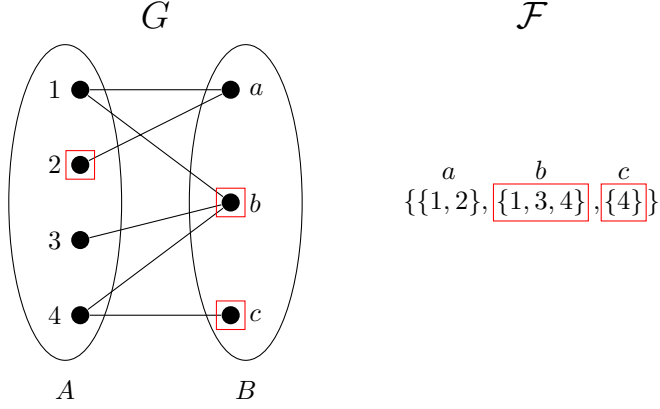

\begin{proof}
Since \setunion is in \numP, $\setunion \lAP \sat$ follows from~\cite{dggj-approx}. We will show $\maxlbis \lAP \setunion$. Let $G = (V,E)$ be an instance of \maxlbis with vertex classes $A$ and $B$. Note that \maxlbis is hard by Theorem~\ref{thm:maxl-bis-hard}. Without loss of generality, let $A = [n]$ for some $n \in \N$. We shall construct an instance $(n, \F)$ of \setunion with the property that $\MIS(G) = \SU{\F}$, from which the result follows immediately. See Figure~\ref{fig:setunion} for an example.

Let $\F = \{N(v)\mid v \in B\}$, so that 
\begin{equation}\label{eq:setunion-F}
\U(\F) = \{N(S)\mid  S \subseteq B\}.
\end{equation}
Given $S \subseteq [n]$, write $\overline{S} = [n] \setminus S$. Similarly, given $S \subseteq B$, write $\overline{S} = B \setminus S$. Given $S \subseteq [n]$, write $\MIS_S(G)$ for the number of maximal independent sets $X \subseteq V$ with $X \cap [n] = S$. We shall prove that
\begin{equation}\label{eq:setunion-target}
\MIS_S(G) = \begin{cases}1 & \textnormal{if }\overline{S} \in \U(\F), \\0 & \textnormal{otherwise.}\end{cases}
\end{equation}
It will follow immediately that the map $X \mapsto \overline{X \cap [n]}$ is a bijection from the set of maximal independent sets of $G$ to $\U(\F)$. 

Take $S \subseteq [n]$. Note that $\MIS_S(G) \in \{0,1\}$ --- any maximal independent set $X \subseteq V$ is uniquely determined by its intersection with $[n]$. It therefore suffices to prove that $\MIS_S(G) > 0$ if and only if $\overline{S} \in \U(\F)$. 

First suppose $\overline{S} \in \U(\F)$. Let $T\subseteq B$ be a maximal set such that $\overline{S} = N(T)$. Then it is immediate that there are no edges between $S$ and $T$. Moreover, $\overline{T} \subseteq N(S)$ by maximality of $T$. Hence $S \cup T$ is a maximal independent set in $G$, and $\MIS_S(G) = 1$.

Now suppose $\overline{S} \notin \U(\F)$, and suppose $X \subseteq V(G)$ is a maximal independent set of $G$ with $X \cap [n] = S$. Then by independence we have $N(S) \cap X = \emptyset$, and by maximality we have $\overline{N(S)} \subseteq X$. Thus $X = S \cup \overline{N(S)}$. By maximality, it follows that $\overline{S} \subseteq N(\overline{N(S)})$ --- otherwise an element of $\overline{S}$ could be added to $X$. But $N(\overline{N(S)}) \cap S = \emptyset$ since $\overline{N(S)}$ is precisely the set of vertices in $B$ with no edges to $S$, so $N(\overline{N(S)}) \subseteq \overline{S}$ and hence $\overline{S} = N(\overline{N(S)})$. But this implies $\overline{S} \in \U(\F)$ by equation~\eqref{eq:setunion-F}, which is a contradiction. Hence $\MIS_S(G) = 0$, and we have proved equation~\eqref{eq:setunion-target}. It follows that $\MIS(G) = \SU{\F}$, as required.
\end{proof}

\subsection{Hardness of \unionrep}

We shall now prove that \unionrep is \sat-hard by reducing from \vc.

{\renewcommand{\thethm}{\ref{thm:ur-hard}}
\begin{thm}
$\unionrep \eAP \sat$.
\end{thm}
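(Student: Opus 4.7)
The plan is to establish $\sat \lAP \unionrep$ by exhibiting an AP-reduction from $\vc$, which is equivalent to $\sat$ under AP-reduction since vertex covers are complements of independent sets (so $\vc \eAP \is \eAP \sat$ by Theorem~\ref{thm:is-hard}); the other direction $\unionrep \lAP \sat$ is immediate from \cite{dggj-approx} because $\unionrep \in \numP$.

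Given a $\vc$-instance $G = (V,E)$, I will construct a $\unionrep$-instance $(N, \F)$ with $N = |E|$ (labelling the edges $1, \dots, N$) and $\F = \{F_v : v \in V\}$, where $F_v = \{e \in E : v \in e\}$ is the set of edges incident to $v$. Provided the sets $F_v$ are pairwise distinct, subfamilies $\F' \subseteq \F$ are in bijection with subsets $S \subseteq V$ via $\F' = \{F_v : v \in S\}$, and $\bigcup_{F \in \F'} F = [N]$ if and only if $S$ is a vertex cover of $G$; hence $\UR{[N]} = \VC(G)$ exactly.

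The only obstacle is ensuring that the $F_v$'s are distinct, since the definition of \unionrep requires $\F \subseteq 2^{[N]}$ and so forbids repetitions. A short structural argument will show that $F_v = F_w$ with $v \neq w$ forces every neighbour of $v$ to be equal to $w$ and vice versa, so this can only happen when $v$ and $w$ are both isolated vertices of $G$ or together form a connected component consisting of a single edge. To handle this, I will first preprocess $G$ by removing its $k$ isolated vertices and $\ell$ isolated-edge components; writing $G^*$ for the resulting graph, the standard multiplicativity of vertex cover counts over connected components gives $\VC(G) = 2^k \cdot 3^\ell \cdot \VC(G^*)$, and by construction every pair of distinct vertices in $G^*$ has distinct incident-edge sets.

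Applying the construction above to $G^*$ then yields $\UR{[N]} = \VC(G^*)$ exactly, and multiplying the oracle answer by $2^k \cdot 3^\ell$ recovers $\VC(G)$. Since $\VC(G)$ is an integer and the preprocessing runs in polynomial time, this is in fact an exact Turing reduction from $\vc$ to $\unionrep$, and hence \emph{a fortiori} an AP-reduction, completing the proof.
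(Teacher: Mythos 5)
Your reduction is essentially the paper's: both map a \vc{} instance $G$ to the family of edge-incidence sets $\{F_v : v \in V\}$ over the ground set $E$, giving $\UR{E} = \VC(G)$ via the bijection between covering subsets of $V$ and subfamilies with union $E$. The one place you go beyond the paper is the preprocessing step: you correctly observe that $F_v = F_w$ for $v \neq w$ forces $v,w$ to be two isolated vertices or an isolated-edge component, that such duplicates would collapse inside $\F \subseteq 2^{[N]}$ and break the bijection (e.g.\ a single isolated edge has $3$ vertex covers but only one representing subfamily), and you repair this with the factor $2^k \cdot 3^\ell$; the paper's proof silently assumes the incidence sets are distinct, so your version is actually the more careful one.
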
}

\begin{proof}
Since \unionrep is in \numP, $\unionrep \lAP \sat$ follows from~\cite{dggj-approx}. We will show $\vc \lAP \unionrep$. Let $G=(V,E)$ be an instance of \vc, which is hard by Theorem~\ref{thm:is-hard}. Without loss of generality let $V = [n]$ for some $n \in \N$,  and let $m = |E|$. We shall construct an instance $(m,\F)$ of \unionrep with the property that $\UR{[m]} = \VC(G)$, from which the result follows immediately.

For each $i \in [n]$, let $S_i$ be the set of edges incident to $i$ in $G$. Let $\F = \{S_i \mid i \in [n]\}$. Thus on identifying $E$ with $[m]$, $(m,\F)$ becomes an instance of \unionrep. Given a set $X \subseteq [n]$, let $X' = \{S_i: i \in X\}$. Then $X$ is a vertex cover of $G$ if and only if $\cup X' = E$. Thus there is a bijection between $\U_{\F}^{-1}(E)$ and vertex covers of $G$, and so $\UR{E} = \VC(G)$ as required.
\end{proof}

\section*{Index of problems}

\noindent\begin{tabular}{@{}p{\textwidth-\widthof{00}}@{}p{\widthof{00}}@{}}
\bmstwo (Problem \ref{prob:bmstwo}) \dotfill & \pageref{prob:bmstwo} \\
\stbmstwo (Problem \ref{prob:stbmstwo}) \dotfill & \pageref{prob:stbmstwo} \\
\bds (Problem \ref{prob:bds}) \dotfill & \pageref{prob:bds} \\
\bims (Problem \ref{prob:bims}) \dotfill & \pageref{prob:bims} \\
\bmes (Problem \ref{prob:bmes}) \dotfill & \pageref{prob:bmes} \\
\stbmes (Problem \ref{prob:stbmes}) \dotfill & \pageref{prob:stbmes} \\
\bms (Problem \ref{prob:bms}) \dotfill & \pageref{prob:bms} \\
\stbms (Problem \ref{prob:stbms}) \dotfill & \pageref{prob:stbms} \\
\bis (Problem \ref{prob:bis}) \dotfill & \pageref{prob:bis} \\
\is (Problem \ref{prob:is}) \dotfill & \pageref{prob:is} \\
\largemes (Problem \ref{prob:largemes}) \dotfill & \pageref{prob:largemes} \\
\largestmes (Problem \ref{prob:largestmes}) \dotfill & \pageref{prob:largestmes} \\
\maxlbis (Problem \ref{prob:maxlbis}) \dotfill & \pageref{prob:maxlbis} \\
\tnaesat (Problem \ref{prob:tnaesat}) \dotfill & \pageref{prob:tnaesat} \\
\setunion (Problem \ref{prob:setunion}) \dotfill & \pageref{prob:setunion} \\
\unionrep (Problem \ref{prob:unionrep}) \dotfill & \pageref{prob:unionrep} \\
\vc (Problem \ref{prob:vc}) \dotfill & \pageref{prob:vc} \\
\end{tabular}

\section*{Acknowledgements}

We thank Luca Manzoni and Yuri Pirola for useful discussions.

\bibliographystyle{plain}
\bibliography{\jobname}

\begin{thebibliography}{10}

\bibitem{BBC00}
Anne Berry, Jean~Paul Bordat, and Olivier Cogis.
\newblock Generating all the minimal separators of a graph.
\newblock {\em International Journal of Foundations of Computer Science},
  11(3):397--403, 2000.

\bibitem{BF05}
Hans~L. Bodlaender and Fedor~V. Fomin.
\newblock Tree decompositions with small cost.
\newblock {\em Discrete Applied Mathematics}, 145(2):143 -- 154, 2005.

\bibitem{BFKKT12}
Hans~L. Bodlaender, Fedor~V. Fomin, Arie~M.C.A. Koster, Dieter Kratsch, and
  Dimitrios~M. Thilikos.
\newblock On exact algorithms for treewidth.
\newblock {\em ACM Trans. Algorithms}, 9(1):12:1--12:23, 2012.

\bibitem{BK08}
Hans~L. Bodlaender and Arie~M.C.A. Koster.
\newblock Combinatorial optimization on graphs of bounded treewidth.
\newblock {\em The Computer Journal}, 51(3):255--269, 2008.

\bibitem{BT01}
Vincent Bouchitt{\'e} and Ioan Todinca.
\newblock Treewidth and minimum fill-in: grouping the minimal separators.
\newblock {\em SIAM Journal on Computing}, 31(1):212--232, 2001.

\bibitem{BT02}
Vincent Bouchitt{\'e} and Ioan Todinca.
\newblock Listing all potential maximal cliques of a graph.
\newblock {\em Theoretical Computer Science}, 276(1-2):17--32, 2002.

\bibitem{bcst-setunion}
Henning Bruhn, Pierre Charbit, Oliver Schaudt, and Jan~Arne Telle.
\newblock The graph formulation of the union-closed sets conjecture.
\newblock {\em European J. Combin.}, 43:210--219, 2015.

\bibitem{crescenzi-badap}
Pierluigi Crescenzi.
\newblock A short guide to approximation preserving reductions.
\newblock In {\em Twelfth {A}nnual {IEEE} {C}onference on {C}omputational
  {C}omplexity ({U}lm, 1997)}, pages 262--273. IEEE Computer Soc., Los
  Alamitos, CA, 1997.

\bibitem{diestel}
Reinhard Diestel.
\newblock {\em Graph Theory, 4th Edition}, volume 173 of {\em Graduate texts in
  mathematics}.
\newblock Springer, 2012.

\bibitem{dggj-approx}
Martin Dyer, Leslie~Ann Goldberg, Catherine Greenhill, and Mark Jerrum.
\newblock The relative complexity of approximate counting problems.
\newblock {\em Algorithmica}, 38(3):471--500, 2004.
\newblock Approximation algorithms.

\bibitem{FKTV08}
Fedor~V. Fomin, Dieter Kratsch, Ioan Todinca, and Yngve Villanger.
\newblock Exact algorithms for treewidth and minimum fill-in.
\newblock {\em SIAM Journal on Computing}, 38(3):1058--1079, 2008.

\bibitem{FV10}
Fedor~V. Fomin and Yngve Villanger.
\newblock Finding induced subgraphs via minimal triangulations.
\newblock In {\em 27th International Symposium on Theoretical Aspects of
  Computer Science, {STACS} 2010, March 4-6, 2010, Nancy, France}, pages
  383--394, 2010.

\bibitem{FV12}
Fedor~V. Fomin and Yngve Villanger.
\newblock Treewidth computation and extremal combinatorics.
\newblock {\em Combinatorica}, 32(3):289--308, 2012.

\bibitem{FY14}
Masanobu Furuse and Koichi Yamazaki.
\newblock A revisit of the scheme for computing treewidth and minimum fill-in.
\newblock {\em Theoretical Computer Science}, 531(0):66 -- 76, 2014.

\bibitem{leslielisting}
Leslie~Ann Goldberg.
\newblock {\em Efficient Algorithms for Listing Combinatorial Structures}.
\newblock Cambridge University Press, 1993.
\newblock Cambridge Books Online.

\bibitem{ICALP}
Leslie~Ann Goldberg, Rob Gysel, and John Lapinskas.
\newblock Approximately counting locally-optimal structures.
\newblock In {\em Automata, Languages, and Programming: 42nd International
  Colloquium, ICALP 2015, Kyoto, Japan, July 6-10, 2015, Proceedings, Part I},
  pages 654--665, Berlin, Heidelberg, 2015. Springer Berlin Heidelberg.

\bibitem{gj-ferroising}
Leslie~Ann Goldberg and Mark Jerrum.
\newblock The complexity of ferromagnetic {I}sing with local fields.
\newblock {\em Combin. Probab. Comput.}, 16(1):43--61, 2007.

\bibitem{gj-bisconjecture}
Leslie~Ann Goldberg and Mark Jerrum.
\newblock Approximating the partition function of the ferromagnetic {P}otts
  model.
\newblock {\em J. ACM}, 59(5):Art. 25, 31, 2012.

\bibitem{gj-notation}
Leslie~Ann Goldberg and Mark Jerrum.
\newblock The complexity of approximately counting tree homomorphisms.
\newblock {\em ACM Trans. Comput. Theory}, 2(2), 2014.

\bibitem{RobPreprint}
Rob Gysel.
\newblock Unique perfect phylogeny characterizations via uniquely representable
  chordal graphs.
\newblock {\em CoRR}, abs/1305.1375, 2013.

\bibitem{G14}
Rob Gysel.
\newblock Minimal triangulation algorithms for perfect phylogeny problems.
\newblock In {\em Proceedings of the 8th International Conference on Language
  and Automata Theory and Applications, Lecture Notes in Computer Science},
  volume 8370, pages 421--432. Springer, 2014.

\bibitem{JYPPLS}
David~S. Johnson, Christos~H. Papadimitriou, and Mihalis Yannakakis.
\newblock How easy is local search?
\newblock {\em J. Comput. Syst. Sci.}, 37(1):79--100, 1988.

\bibitem{kmnf}
Toshinobu Kashiwabara, Sumio Masuda, Kazuo Nakajima, and Toshio Fujisawa.
\newblock Generation of maximum independent sets of a bipartite graph and
  maximum cliques of a circular-arc graph.
\newblock {\em J. Algorithms}, 13(1):161--174, 1992.

\bibitem{kou}
Shuji Kijima, Yoshio Okamoto, and Takeaki Uno.
\newblock Dominating set counting in graph classes.
\newblock In {\em Computing and Combinatorics - 17th Annual International
  Conference, {COCOON} 2011, Dallas, TX, USA, August 14-16, 2011. Proceedings},
  pages 13--24, 2011.

\bibitem{KK98}
Ton Kloks and Dieter Kratsch.
\newblock Listing all minimal separators of a graph.
\newblock {\em SIAM Journal on Computing}, 27:605--613, 1998.

\bibitem{L10}
Daniel Lokshtanov.
\newblock On the complexity of computing treelength.
\newblock {\em Discrete Applied Mathematics}, 158(7):820 -- 827, 2010.

\bibitem{sy-pls}
Alejandro~A. Sch{\"a}ffer and Mihalis Yannakakis.
\newblock Simple local search problems that are hard to solve.
\newblock {\em SIAM J. Comput.}, 20(1):56--87, 1991.

\bibitem{SL97}
Hong Shen and Weifa Liang.
\newblock Efficient enumeration of all minimal separators in a graph.
\newblock {\em Theoretical Computer Science}, 180(1-2):169--180, 1997.

\bibitem{sipser}
Michael Sipser.
\newblock {\em Introduction to the Theory of Computation}.
\newblock Cengage Learning, 2nd edition, 2005.

\bibitem{Takata-seps}
Ken Takata.
\newblock Space-optimal, backtracking algorithms to list the minimal vertex
  separators of a graph.
\newblock {\em Discrete Appl. Math.}, 158(15):1660--1667, 2010.

\bibitem{tsukiyama}
Shuji Tsukiyama, Mikio Ide, Hiromu Ariyoshi, and Isao Shirakawa.
\newblock A new algorithm for generating all the maximal independent sets.
\newblock {\em {SIAM} J. Comput.}, 6(3):505--517, 1977.

\bibitem{vadhan}
Salil~P. Vadhan.
\newblock The complexity of counting in sparse, regular, and planar graphs.
\newblock {\em {SIAM} J. Comput.}, 31(2):398--427, 2001.

\bibitem{whitney-seps}
Hassler Whitney.
\newblock Planar graphs.
\newblock {\em Fundamenta Mathematicae}, 21(1):73--84, 1933.

\bibitem{zuckerman-sat}
David Zuckerman.
\newblock On unapproximable versions of {NP}-complete problems.
\newblock {\em SIAM J. Comput.}, 25(6):1293--1304, 1996.

\end{thebibliography}
 \end{document}